\title{Precoloring extension with demands on paths}
\author{Arun Kumar Das}{School of Computer and Information Sciences, University of Hyderabad}{arunkumardas@uohyd.ac.in}{https://orcid.org/0000-0002-3645-4210}{}
\author{Michal Opler}{Faculty of Information Technology, Czech Technical University in Prague}{michal.opler@fit.cvut.cz}{https://orcid.org/0000-0002-4389-5807}{This work was supported by the Czech Science Foundation Grant no. 24-12046S.}
\author{Tomáš Valla}{Faculty of Information Technology, Czech Technical University in Prague}{tomas.valla@fit.cvut.cz}{https://orcid.org/0000-0003-1228-7160}{This work was supported by the Czech Science Foundation Grant no. 24-12046S.}
\authorrunning{A. K. Das, M. Opler, T. Valla}
\newcommand{\defproblem}[3]{
	\vspace{3mm}
	\noindent\fbox{
		\begin{minipage}{.95\columnwidth}
			#1\newline
			\textbf{Input:} #2\\
			\textbf{Question:} #3
		\end{minipage}
	}
	\vspace{3mm}
}
\newcommand{\DPED}{\textsc{DPED}\xspace}
\newcommand{\LCD}{\textsc{LCD}\xspace}
\newcommand{\MSS}{\textsc{MSS}\xspace}
\newcommand{\DPE}{\textsc{DPE}\xspace}
\newcommand{\DLC}{\textsc{DLC}\xspace}
\newcommand{\NP}{{\sf NP}\xspace}
\newcommand{\Wone}{{\sf W[1]}\xspace}
\newcommand{\true}{\mathtt{true}}
\newcommand{\false}{\mathtt{false}}
\DeclareMathOperator{\dist}{dist}
\DeclareMathOperator{\pos}{pos}
\keywords{precoloring extension, distance coloring, FPT, approximation algorithms}
\begin{document}

\maketitle

\begin{abstract}
Let $G$ be a graph with a set of precolored vertices, and let us be given an integer distance parameter $d$
and a set of integer \emph{demands} $d_1,\dots,d_c$.
The \emph{Distance Precoloring Extension with Demands} (\DPED) problem
is to compute a vertex $c$-coloring of $G$ such that the following three conditions hold:
(i) the resulting coloring respects the colors of the precolored vertices,
(ii) the distance of two vertices of the same color is at least $d$,
and (iii) the number of vertices colored by color $i$ is exactly $d_i$.
This problem is motivated by a program scheduling in commercial broadcast channels
with constraints on content repetition and placement, which leads precisely to the \DPED problem for paths.

In this paper, we study \DPED on paths and present a polynomial time exact
algorithm when precolored vertices are restricted to the two ends of the path
and devise an approximation algorithm for \DPED with an additive approximation
factor polynomially bounded by $d$ and the number of precolored vertices.
Then, we prove that the \emph{Distance Precoloring Extension} problem on paths, a less
restrictive version of \DPED without the demand constraints, and then \DPED itself, is NP-complete.
Motivated by this result, we further study the parameterized complexity of \DPED on paths.
We establish that the \DPED problem on paths is $W[1]$-hard when parameterized by the number of colors and the distance.
On the positive side, we devise a fixed parameter tractable (FPT)
algorithm for \DPED on paths when the number of colors, the distance, and the
number of precolored vertices are considered as the parameters.
Moreover, we prove that \emph{Distance Precoloring Extension} is FPT parameterized by the distance.
As a byproduct, we also obtain several results for the \emph{Distance List Coloring} problem on paths.

\end{abstract}

\section{Introduction}
\emph{Vertex coloring} of graphs is one of the most studied and fundamental problems in structural and algorithmic graph theory. 
A \emph{k-coloring} of a given graph partitions its vertices into $k$ subsets (color classes) such that there is no edge between two members in the same partition.
Graph coloring problems~\cite{jensen2011graph} have a rich research
history~\cite{demange2015some,kratochvil1999new,woodall2001list,broersma2010coloring},
comprising numerous variants and results, due to their significant applications in
scheduling~\cite{bhattacharya2018dynamic}, resource allocation~\cite{osawa2018algorithms}, compiler design~\cite{smith2004generalized}, computational biology~\cite{Khor10},
network analysis~\cite{barenboim2009distributed}, and geography~\cite{freimer2000generalized}.
We study a particular variant of the vertex coloring problem with three additional constraints.
Some of the vertices are already given precolored on input, and the sought coloring must respect these colors.
Additionally, we are given required sizes of all color classes (called demands), and finally, we require any two vertices of the same color to lie at a distance of at least~$d$ from each other.
A vertex coloring that satisfies the last condition is called a \emph{$d$-distance coloring}.
Hence, classical vertex colorings are exactly $1$-distance colorings.
Alternatively, $d$-distance colorings of a graph~$G$ correspond to classical vertex colorings of the $d$th power of~$G$.

Formally, the problem we consider is stated as follows.
	
\defproblem{\textsc{Distance Precoloring Extension with Demands (\DPED)}}
{A graph $G = (V,E)$ on $n$ vertices, a set of colors $C$, a non-negative integer~$d$, a partial pre-coloring $\gamma'\colon A \to C$ for some $A \subseteq V$, and a demand function $\eta \colon C \to [n] $.}
{Is there a $d$-distance coloring $\gamma \colon V \to C$ that
\begin{itemize}
	\item $\gamma$ extends $\gamma'$ (i.e., $\gamma(v) = \gamma'(v)$ for every $v \in A$), and
	\item for every $c \in C$, the number of vertices newly colored by color $c$ is exactly $\eta(c)$, i.e., $|\{v \in V \setminus A \mid \gamma(v)=c\}| = \eta(c)$.  
\end{itemize}
}


We address the problem for paths and disjoint unions of paths.
Our research is motivated by a real-world scheduling problem of a television broadcasting company:
A commercial block of a daily broadcast consists of $n$ slots ($n$ vertices of a path),
where some slots have already been allocated the type of commercial (precolored vertices).
The commercials are already paid for (the demands are given), and no two commercials of the same type may be allocated close together.
The task is thus to find an assignment of commercials to slots (color the
vertices) such that no two commercials of the same type are too close (vertices being too close on the path must be colored by distinct colors), and all of the commercials (demands) are used.
This original setting leads to a lot of interesting generalizations.

A notable amount of previous research has been done in the direction of the coloring of
graphs and most of the problems are proved to be \NP-hard depending on the type of the input.
There are numerous complexity results
for different variants of the coloring problems, for example, in terms of parameters like treewidth~\cite{fellows2011complexity},
distance to cluster and co-cluster~\cite{GomesGS23},
number of colors and maximum degree~\cite{de2019parameterized},
clique modulator~\cite{gutin2021parameterized},
or vertex cover~\cite{Jaffke2023}.
Bir{\'o} et al.~\cite{biro1992precoloring} introduced the precoloring extension problem for interval graphs.
They proved that the problem is polynomial time solvable when each color is used at most once for precoloring, and \NP-complete when they are used twice or more.
They also extended the results for graphs with bounded treewidth.
They mentioned the problem as a variant of the \textsc{List Coloring} problem~\cite{erdos1979choosability}.
In this problem, each vertex is assigned with a list of colors, and they must be colored with a color from the assigned list.
The graph is called \emph{L-list colorable} if there exists a valid vertex
coloring which chooses every color from the assigned lists~$L$, and it is called \emph{k-choosable} if this is possible for any assignment of lists of size at least~$k$.
List coloring has been studied on trees~\cite{bodlaender2022list},
planar graphs~\cite{voigt1993list}, and many other graph classes~\cite{Skrekovski98}.

Yet another studied variant of coloring, the \textsc{Equitable Coloring} problem~\cite{meyer1973equitable},
requires the difference between the total number of vertices colored with two different colors to be at most one.
Hence \textsc{Coloring with Demands} can be seen as a direct generalization of this notion.
In practice, it is common to combine multiple constraints for coloring problems in order to address various applications, e.g., Pelsmajer~\cite{pelsmajer2009equitable} studied the problem of equitable list coloring.
In the \textsc{Distance Coloring} problem, monochromatic vertices cannot occur within a given distance from each other.
This generalization of coloring has also been studied extensively~\cite{sharp2007distance,agnarsson2003coloring,agnarsson2000powers}.
As stated earlier, the problem of coloring paths is motivated by scheduling problems
and it has been studied for many other variants, including call scheduling~\cite{erlebach2001complexity},
nonrepetitive list colourings~\cite{grytczuk2011nonrepetitive}, \emph{radio $k$-colorings}~\cite{chartrand2004radio} etc. 

\begin{figure}[h]
    \centering
    \includegraphics[width=\linewidth]{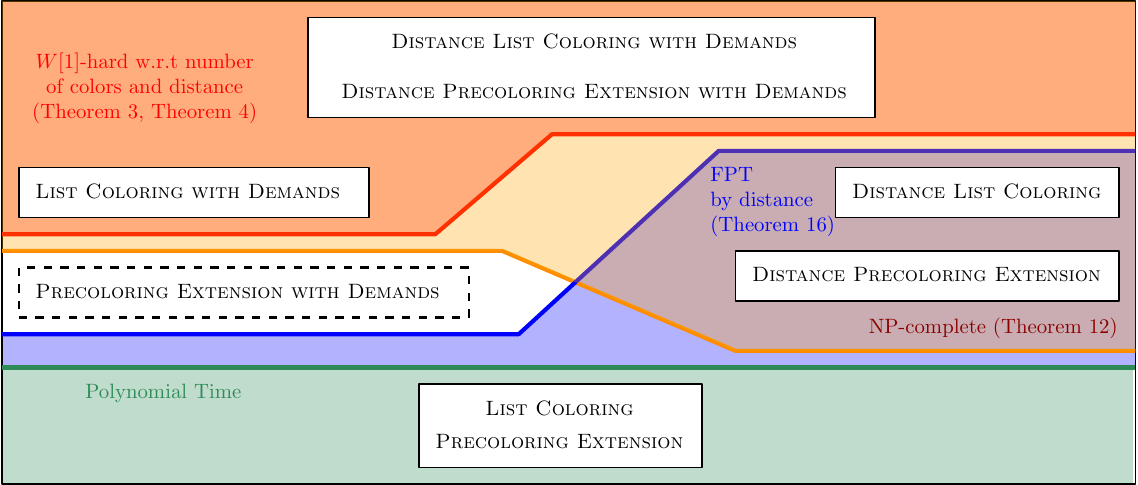}
    \caption{A pictorial representation of the results.}
    \label{fig:res}
\end{figure}

In this paper, we study the complexity of \DPED when $G$ is the union of disjoint paths.
We prove that the decision version of the problem is \NP-complete when the number of colors, the number of precolored vertices, and the minimum distance $d$ between two vertices of the same color are part of the input.
This hardness motivates the study of the problem in terms of parameterized complexity.
We reduce the problem to the problem of \textsc{List Coloring with Demands}.
We show that both problems are \Wone-hard when parameterized by distance and the number of colors.
We prove that the hardness holds for the list coloring problem even if we drop the condition of distance coloring.
However, the problem admits an FPT algorithm when we drop the demands and consider only the distance,
implying that \textsc{Distance Precoloring Extension} on paths is also FPT by distance.
Both problems remain \NP-complete when the number of colors, the number of precolored vertices, and the distance are considered a part of the input.
On the positive side, we prove that \DPED is polynomial time solvable on a single path when the precolored vertices appear only at the ends.
Further, we show that \DPED on path instances is FPT when the number of colors,
the distance parameter, and the number of precolored vertices are considered as parameters simultaneously.
Then we present an approximation algorithm that runs in time $pd^{O(d)} + O(n\log c)$
and provides an additive approximate solution with an error of $O(d^2 \cdot p)$.
Finally, we establish the \NP-completeness of the problems of \textsc{Distance List Coloring}
and \textsc{Distance Precoloring extension} on paths.
The relationships between the problems and the results are illustrated in Figure~\ref{fig:res}.
We pose the complexity of the \textsc{Precoloring Extension with Demands} on paths as an interesting open problem.


The paper is organized as follows.
Section~\ref{sec:endprecolor} presents a polynomial time solution for the \DPED problem on a path having only its ends precolored.
In Section~\ref{sec:wonehardness} we establish the \Wone-hardness of \DPED when the parameters are the number of colors and the distance.
Section~\ref{sec:fewprecolored} contains results for special cases, when the problem parameters (number of
precolored vertices, number of colors, distance) are in some sense limited: we present an approximation algorithm with small additive error and an exact FPT algorithm.
Finally, Section~\ref{sec:nodemands} presents dynamic programming algorithm and \NP-completeness results
for the variant without the demands.

\section{Polynomial algorithm for paths with precolored end-vertices}
\label{sec:endprecolor}
As our first result, we show that \textsc{Distance Precoloring Extension with Demands} can be solved in polynomial time whenever we allow precolored vertices only at the ends of the path.
In this setting, we are coloring a path $G$ with vertices $v_1, \dots, v_n$ with a precoloring~$\gamma'$ that assigns colors to some initial segment $v_1, \dots, v_s$ and some final segment $v_t, \dots, v_n$.
We say that such an instance of~\DPED is \emph{end-precolored}.
We describe a greedy algorithm that solves end-precolored instances of~\DPED in polynomial time, regardless of both the number of colors and the distance parameter.

The algorithm colors the vertices in their left-to-right order along the path, starting with the first uncolored vertex.
For each vertex, it computes a set of feasible colors for the current vertex and chooses the one with the highest remaining demand. In case of a tie, the algorithm
chooses the color that appears earlier in the precoloring of the final segment of the path.
We remark that to compute the set of feasible colors for vertex~$v$, it is necessary to exclude not only colors of the previous $d$ vertices to the left of~$v$ but also the precolored endsegment in case $v$ is at distance at most $d$ from it. 
The technique is formally
described in Algorithm~\ref{algo:appx-end-seg}.

\begin{algorithm}
	\DontPrintSemicolon
	\newcommand\mycommfont[1]{\textit{#1}}
	\SetCommentSty{mycommfont}
	
	\SetKwComment{Comment}{$\triangleright$\ }{}
	\SetKwComment{Commentl}{$\triangleright$\ }{$\triangleleft$}  
	\caption{Greedy algorithm for end-precolored \DPED \label{algo:appx-end-seg}}
	\KwIn{End-precolored path instance $(G, C, d, \gamma', \eta)$ of \DPED, where $G$ has vertices $v_1, \dots, v_n$ and $\gamma'$ precolors the segments $v_1, \dots, v_s$ and $v_t, \dots, v_n$.}
	\KwOut{A $d$-distance coloring $\gamma$ of $G$ that extends $\gamma'$ and satisfies the demands $\eta$, if it exists,  and ``\texttt{null}'' otherwise.}
	\For(\Comment*[f]{compute tie-breakers depending on the final segment}){$a \in C$}{$\pos(a) \gets$ minimum $j \ge t$ such that $\gamma'(v_j) = a$, or $\infty$ if no such $j$ exists.}
	$\gamma \gets \gamma'$ \Comment*{initialize $\gamma$ with the precoloring~$\gamma'$}
	\For(\Comment*[f]{iterate over non-precolored vertices}){$i\gets s+1$ \KwTo $t-1$}{
		$C_f \gets C \setminus \gamma(\{v_{\max(i-d,1)}, \dots, v_{\min(i+d, n)}\})$ \Comment*{the set of feasible colors for $v_i$}
		\If{$C_f=\emptyset$ \textbf{\normalshape or} $\eta(a) = 0$ for every $a \in C_f$}{\KwRet{\texttt{null}} \Comment*{no feasible color with positive demand}}
		$a \gets $ a color in~$C_f$ with the largest demand~$\eta(a)$ that, moreover, minimizes $\pos(a)$\\
		$\gamma(v_i) \gets a$\Comment*{$v_i$ receives the color~$a$}
		$\eta(a) \gets \eta(a) - 1$ \Comment*{demand of $a$ is decreased}
	}
	\KwRet{$\gamma$}
\label{apxalgo}
\end{algorithm}

\begin{theorem}
	\label{thm:greedy-algo}
	Algorithm~\ref{apxalgo} solves \DPED on end-precolored paths in time $O(n \log c)$ where $n$ is the number of vertices and $c$ is the number of colors.
\end{theorem}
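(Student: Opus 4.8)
The plan is to bound the running time and then prove correctness in two directions: soundness --- every colouring the algorithm outputs is a valid solution --- and completeness --- whenever a valid solution exists the algorithm outputs one (i.e.\ it never returns \texttt{null}). For the running time, I prefix the algorithm with the $O(n)$-time check $\sum_{a\in C}\eta(a)=|V\setminus A|$ (a trivial necessary condition) and with a check that $\gamma'$ is itself a valid partial $d$-distance colouring, since otherwise no extension exists; computing $\pos$ is also $O(n)$. In the main loop I maintain, for each colour, the number of vertices of the current window $v_{\max(i-d,1)},\dots,v_{\min(i+d,n)}$ already coloured by it, together with a balanced search tree holding exactly the currently feasible colours of positive demand, keyed by the pair $(\eta(a),\pos(a))$ so that the tree's extremum is precisely the colour the algorithm selects. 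Passing from $i$ to $i+1$ shifts the window by one vertex on each side, so only $O(1)$ colours change their window count, at most $O(1)$ colours enter or leave the tree, and one demand is decremented; since the tree never holds more than $\min(c,|V\setminus A|)$ colours this is $O(1)$ tree operations per iteration at $O(\log c)$ each, hence $O(n\log c)$ overall.

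For soundness, suppose the algorithm returns $\gamma$. It extends $\gamma'$ because the loop writes only to the non-precoloured vertices $v_{s+1},\dots,v_{t-1}$. It is a $d$-distance colouring: given $v_i,v_j$ with $i<j$ and $j-i\le d$, whichever of the two is coloured inside the loop has the other vertex inside the window defining its feasible set $C_f$, and that other vertex is already coloured at that moment (it is precoloured or was processed earlier), so their colours differ --- here it is essential that $C_f$ excludes the colours of the precoloured endsegments whenever they fall within distance $d$. Finally $\gamma$ meets the demands: the loop colours all $|V\setminus A|$ free vertices, each time with a colour of positive remaining demand, and since the remaining demands initially sum to $|V\setminus A|$ they finish summing to $0$, so each is exactly $0$ and every colour is used exactly its prescribed number of times.

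Completeness I prove via the invariant: after the loop has coloured $v_{s+1},\dots,v_{i-1}$, the current partial colouring extends to some valid solution. This holds for $i=s+1$ by assumption. For the inductive step, take such a valid extension $\gamma^{(i)}$ and let $a$ be the algorithm's choice for $v_i$; if $\gamma^{(i)}(v_i)=a$ we are done, so set $b=\gamma^{(i)}(v_i)\neq a$. Because $\gamma^{(i)}$ is valid and agrees with the partial colouring, no already-coloured vertex within distance $d$ of $v_i$ has colour $b$, hence $b\in C_f$; writing $\eta_i$ for the remaining demands at step $i$, this yields $\eta_i(b)\ge 1$ (so the algorithm does not stop here), $\eta_i(a)\ge\eta_i(b)$, and $\pos(a)\le\pos(b)$ if the two demands are equal. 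It remains to transform $\gamma^{(i)}$ into a valid extension $\hat\gamma$ with $\hat\gamma(v_i)=a$ still agreeing with the partial colouring. The structural observation is that in any valid colouring the vertices coloured $a$ or $b$ fall into runs whose consecutive members at distance $\le d$ must alternate between $a$ and $b$, and that $v_i$ begins such a run (since $a,b\in C_f$, no $a$- or $b$-coloured vertex lies within distance $d$ to its left). One flips the alternation of the run containing $v_i$ so that $v_i$ receives $a$, and then restores the global counts of $a$ and $b$ by flipping one additional odd-length run; the inequality $\eta_i(a)\ge\eta_i(b)$ guarantees that a suitable such run exists. Carrying the induction through to the end of the loop, the now-complete partial colouring extends to --- hence equals --- a valid solution, so the algorithm returns a valid $\gamma$.

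The main obstacle is the interaction of a run beginning at $v_i$ with the precoloured final segment: such a run may extend rightward into $v_t,\dots,v_n$, where the colours are frozen and fix the alternation of the entire run, so it cannot simply be flipped. Handling this is exactly where the tie-break on $\pos$ is used: one has to show that either this run stays inside the free region, or it can first be detached by recolouring an intermediate free vertex while repairing the demands along an alternating chain, and that the colour the algorithm prefers (the one with smaller $\pos$, i.e.\ the one the final segment blocks sooner) is precisely the one for which such a repair is available. Establishing that this rearrangement is always possible, together with the accompanying case analysis on the position of the run relative to the two endsegments, is the crux of the argument; the running-time bound and the soundness direction are routine by comparison.
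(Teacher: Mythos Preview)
Your overall plan coincides with the paper's: an exchange argument showing that whenever the algorithm's partial colouring first disagrees with a feasible solution at position $i$, one can modify the feasible solution on an alternating $a/b$ run so that it now agrees at $i$ as well. The running-time discussion and the soundness direction are fine, and you correctly isolate the only nontrivial point --- the interaction of the run with the right precoloured segment $R$ --- and correctly name the $\pos$ tie-break as the relevant tool.

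The gap is that you stop precisely there. The paper does carry out this case analysis, and its mechanism is not the ``detach by recolouring an intermediate free vertex and repair along a chain'' that you suggest; it is a contradiction argument showing the obstructive subcase cannot arise. Concretely: if the maximal alternating $ba$-run $S=(s_1,\dots,s_k)$ from $v_i$ ends within distance $d$ of $R$, then $S$ already contains every free-region occurrence of $a$ and $b$ from $v_i$ on, so $\eta_i(a)\ge\eta_i(b)$ forces $k$ to be even (hence in fact $\eta_i(a)=\eta_i(b)$) and $\gamma'(s_k)=a$. Swapping colours along $S$ can then fail only if the first $a$-or-$b$ vertex of $R$ within distance $d$ of $s_k$ is coloured $b$; but $\eta_i(a)=\eta_i(b)$ triggers the tie-break, giving $\pos(a)\le\pos(b)$, which contradicts $b$ appearing first. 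So no ``detaching'' is needed --- the bad configuration is ruled out. The remaining cases are $S$ even and far from $R$ (swap $S$ alone) and $S$ odd and far from $R$ (a counting argument yields a second disjoint odd $ab$-run $S'$; swap both, with a short subcase when $S'$ itself approaches $R$). Until these are actually written out, your completeness direction is an outline rather than a proof.
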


\begin{proof}
For contradiction, assume that Algorithm~\ref{apxalgo} fails, that is, there exists a feasible solution to the instance, and the algorithm does not find any.
Let us denote the vertices of the path~$P$ as $v_1,v_2,\dots,v_n$ from left to the right
and let $R=\{v_t,\dots,v_n\}$ be the right precolored segment of $P$.
Let the algorithm produce a partial coloring $\gamma$ and let $\gamma'$ be a feasible solution where
the index $i$ satisfying $\gamma(v_j)=\gamma'(v_j)$ for all $j<i$ and $\gamma(v_i)\ne\gamma'(v_i)$
is the maximum possible.
We are going to modify $\gamma'$ such that we obtain another feasible solution that agrees with $\gamma$ even on $\gamma(v_i)$.

Let $a=\gamma(i)$ and $b=\gamma'(i)$.
By \emph{alternating $ba$-sequence starting at $v_i$} we denote a maximal subsequence $S=(s_1,s_2,\dots s_k)$
of the vertices $v_i,v_{i+1},\dots,v_n$ such that $s_1=v_i$, $\gamma'(s_j)=b$ for odd $j$ and $\gamma'(s_j)=a$ for even $j$,
the distance of $s_j$ and $s_{j+1}$ is at most $d$ for every $j \in [k-1]$,
and no vertex $s_j$ is precolored.
Observe that when the algorithm was assigning color to $v_i$, the remaining demand of the color $a$
was at least the remaining demand of the color $b$, and that there is no color
$a$ in the distance less than $d$ of the vertex $v_i$.
Also note that for an alternating $ba$-sequence $s_1,\dots,s_k$, there can be no vertex of $P$
between any $s_i$ and $s_{i+1}$ colored by $a$ or $b$, as this would break the distance property.
Let us distinguish several cases.

First, consider the case when the last element of the sequence $S$ is closer than $d$
to the right precolored part $R$ of $P$.
In this case the sequence $S$ must contain all occurrences of color $a$ and $b$
among the vertices from $v_i$ to $v_n$ in $\gamma'$.
This implies that $S$ is an even sequence, as the demand of $a$ was at least the demand of $b$.
If the vertices from $R$ contain neither color $a$ nor color $b$,
we can freely swap the colors $a$ and $b$ in $S$.
Suppose now the vertices from $R$ contain color $a$ (or $b$).
If there is no occurrence of $a$ or $b$ in $R$ that is within distance $d$ of $s_k$,
then the colors $a$ and $b$ can be swapped in $\gamma'$.
If the first occurrence of $a$ or $b$ in $R$ is within distance $d$ of $s_k$,
it must be $b$ because $s_k$ is colored $a$ by $\gamma'$,
but this contradicts the fact that the algorithm has colored $s_i$ with $a$ instead of $b$.

Next, assume that the alternating $ba$-sequence $S$ starting at $v_i$ has even length and its last element
is more than $d$ vertices away from the precolored right part~$R$ of $P$.
We can freely swap the colors $a$ and $b$ at the vertices of $S$ as there is no conflict both to the left and right of $S$, thus obtaining a feasible solution $\gamma''$ with greater index $i$.

It remains to solve the case when $S$ has odd length and its last element is more than $d$ vertices far
from the precolored right part~$R$ of $P$.
The sequence $S$ contains one more color $b$ than $a$ and as the algorithm has chosen the color $a$ for $v_i$
(which means the remaining demand of $b$ was not greater than $a$),
there must exists alternating $ab$-sequence $S'$ starting at some vertex $w$ of odd length, $S'$ disjoint from $S$,
which thus contains one more color $a$ than $b$.
We may assume that $S'$ is not within distance $d$ to the right precolored part $R$:
if this is the case, we either obtain a contradiction if $a$ occurs before $b$ in $R$ and $a$ and $b$ had the same demand,
or we can recolor $S$ and $S'$ without problems. 
We may now swap the colors $a$ and $b$ both in $S$ and $S'$,
which does not change the total number of colors used. This produces a feasible solution $\gamma''$ with greater index $i$.

We obtained the contradiction in all cases, which shows the correctness of the algorithms.
The time complexity $O(n\log c)$ follows from using a smarter data structure for assigning the available colors: we may dynamically maintain a binary heap containing the set of feasible colors~$C_f$ with the key being a combination of the remaining demand
with position ($\pos(\cdot)$) and the value stored being the color label.
\end{proof}

\section{Few colors and small distance}
\label{sec:wonehardness}
In this section, we focus on solving general instances of \DPED when both the number of colors and the distance parameter are not too large.
We stress that this additional assumption is very natural, e.g., for the original scheduling motivation.

First, we observe that we can solve \DPED in this regime in polynomial time,
provided that the number of colors and the distance parameter $d$ are constant.
The algorithm follows a standard dynamic programming approach for coloring and thus, we provide here only a brief sketch and include all details in Appendix~\ref{apx:LCD-W1h}.
\begin{proposition}
\label{pro:dp}
An instance $(G, C, d, \gamma', \eta)$ of \DPED on paths can be solved in time $c^{O(d)} \cdot n^{c+1}$ where $n$ is the number of vertices and $c$ is the number of colors.
\end{proposition}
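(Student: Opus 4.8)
The plan is to solve the problem by a standard left-to-right dynamic programming over the path $v_1, \dots, v_n$, where a state records everything about the already-colored prefix that can influence the remaining colors. A state is a triple $(i, w, \mathbf{r})$: here $i \in \{0, 1, \dots, n\}$ is the length of the colored prefix; $w \in C^{\le \max(d-1,0)}$ is the \emph{active window}, i.e.\ the sequence of colors assigned to the last $\min(i, d-1)$ vertices $v_{i-\min(i,d-1)+1}, \dots, v_i$; and $\mathbf{r} = (r_a)_{a \in C}$ with $0 \le r_a \le \eta(a)$ records, for each color $a$, how many vertices of $V \setminus A$ may still receive color $a$. We maintain a boolean table $\mathrm{DP}[i][w][\mathbf{r}]$ that is true iff there is a $d$-distance coloring of $v_1, \dots, v_i$ agreeing with $\gamma'$ on $A \cap \{v_1, \dots, v_i\}$, whose last window equals $w$, and in which exactly $\eta(a) - r_a$ vertices of $(V \setminus A) \cap \{v_1, \dots, v_i\}$ have color $a$ for every $a \in C$.

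The transition from a state $(i, w, \mathbf{r})$ to $v_{i+1}$ ranges over candidate colors $a \in C$ and keeps those that (i) do not occur in $w$ — which is exactly the distance constraint, since on a path any two vertices at distance $< d$ lie within $d-1$ of each other, and crucially the window contains the colors of any nearby precolored vertices as well — and that (ii) satisfy $a = \gamma'(v_{i+1})$ if $v_{i+1} \in A$, or $r_a \ge 1$ if $v_{i+1} \notin A$. The new window $w'$ is obtained from $w$ by appending $a$ and, if the length would exceed $d-1$, dropping the first entry; the new demand vector equals $\mathbf{r}$ when $v_{i+1} \in A$ and equals $\mathbf{r}$ with $r_a$ decremented otherwise. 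The base case is $\mathrm{DP}[0][\varepsilon][(\eta(a))_{a \in C}] = \text{true}$, and the instance is a yes-instance iff $\mathrm{DP}[n][w][(0,\dots,0)]$ holds for some window $w$; a coloring is recovered by backtracking.

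For the running time there are at most $n+1$ choices of $i$, at most $\sum_{k=0}^{d-1} c^k = c^{O(d)}$ windows, and at most $\prod_{a \in C}(\eta(a)+1) \le (n+1)^c$ demand vectors, so the table has $c^{O(d)} \cdot n^{c+1}$ entries; each is computed by trying at most $c$ candidate colors, a factor absorbed into $c^{O(d)}$, giving the stated bound. For a disjoint union of paths one processes the components sequentially, threading $\mathbf{r}$ through — the final layer of one component's table seeds the initial layer of the next, with the window reset to $\varepsilon$ since vertices in distinct components are at infinite distance. There is no deep obstacle here: the only points requiring care are that the demand bookkeeping must ignore precolored vertices (condition (iii) counts only vertices of $V \setminus A$), and that the active window must include precolored vertices so that the distance constraint is correctly enforced across the boundary between precolored and free segments — both of which are handled by the definitions above.
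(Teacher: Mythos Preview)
Your proposal is correct and follows essentially the same dynamic-programming approach as the paper: both index states by the current position, the window of colors on the last $\Theta(d)$ vertices, and a vector of per-color counts, yielding $c^{O(d)}\cdot n^{c+1}$ states with $O(c)$ work per transition. The only cosmetic differences are that you track \emph{remaining} demands while the paper tracks \emph{accumulated} frequencies, and that your window has length $d-1$ rather than $d$ (reflecting the paper's own ambiguity between ``distance $\ge d$'' and ``distance $> d$''); neither affects correctness nor the stated bound.
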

\begin{proof}[Proof sketch]
Let $\gamma_i$ be a $d$-distance coloring of an initial segment $\{v_1, \dots, v_i\}$ of the path~$G$.
The signature of~$\gamma_i$ consists of the colors of the rightmost $d$~vertices together with the frequencies of individual colors as used by~$\gamma_i$.
The algorithm sequentially computes for each $i \in [n]$ all possible signatures of $d$-distance colorings of $\{v_1, \dots, v_i\}$ that, moreover, extend the precoloring~$\gamma'$.
At the end, this suffices to check whether there is a $d$-distance coloring of $\{v_1, \dots, v_n\}$ that extends $\gamma'$ and the frequencies of colors used equal the demands.
\end{proof}

It is only natural to ask whether we can achieve a polynomial runtime in this regime where the degree of the polynomial is independent of both $c$ and $d$.
We answer this question negatively, under standard theoretical assumptions, by showing that \DPED is \Wone-hard with respect to both of these parameters.

\begin{theorem}
\label{thm:DPED-Wh}
\DPED is \Wone-hard on paths with respect to the number of colors and the distance~$d$.
\end{theorem}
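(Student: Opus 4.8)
The plan is to give a parameterized reduction from a numerical \Wone-hard problem in which the sought object is essentially a vector of $k$ integer counts; the natural candidate is \textsc{Unary Bin Packing} parameterized by the number of bins~$k$, which is \Wone-hard. An instance consists of item sizes $a_1,\dots,a_m$ given in unary and $k$ bins of common capacity~$B$ with $\sum_j a_j = kB$, and we ask whether the items can be assigned to the bins without exceeding any capacity. From such an instance I would build, in polynomial time (using that the $a_j$ are unary, so $\sum_j a_j$ stays polynomial), a \DPED instance on a disjoint union of paths with $O(k)$ colors and distance $d = O(k)$; joining the components by long ``neutral'' padding paths colored only with spacer colors turns this into a single path without affecting $c$ or~$d$. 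The $k$ bins become $k$ dedicated \emph{bin colors} $B_1,\dots,B_k$, and we add a constant multiple of~$k$ further \emph{spacer colors} whose role is purely to be freely placeable, so that the $d$-distance condition can always be met on the padding and inside the gadgets; note that $\Theta(k)$ spacer colors is enough since $d=O(k)$ and a long $d$-distance-coloured path needs $d$ colours.

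Each item~$j$ is represented by a path gadget~$P_j$ whose intended colorings are parametrized by a single chosen bin $b_j\in\{1,\dots,k\}$. The gadget is built from overlapping windows of width at most~$d$: a ``selector'' vertex, equipped with a list equal to the bin colors (the list being simulated by placing precolored spacer vertices of the remaining colors within distance less than~$d$), is thereby forced onto a single bin color, and a chain of such windows propagates that choice along~$P_j$, using the distance constraint — which can only forbid nearby vertices from repeating a color — to pin each next selector to the same bin color as the previous one. The window lengths are tuned so that, in any valid coloring of~$P_j$, the number of occurrences of each bin color inside~$P_j$ depends on~$b_j$ and~$a_j$ in a controlled way; aggregating over all gadgets, the total usage of $B_i$ becomes an affine function of the load $\sum_{j:\,b_j=i} a_j$ of bin~$i$. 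Setting $\eta(B_i)$ to the value this function takes when bin~$i$ is loaded to exactly~$B$, and letting the spacer demands absorb everything else, makes a feasible \DPED coloring exist iff there is an assignment with every bin loaded to exactly~$B$ — which, since $\sum_j a_j = kB$, is exactly feasibility of the bin-packing instance. As $c=O(k)$ and $d=O(k)$, this is a parameterized reduction for the combined parameter ``number of colors plus~$d$''.

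The bulk of the work — and the step I expect to be the main obstacle — is the internal design of~$P_j$ together with the counting. The subtle point is that, unlike lists or equalities, the $d$-distance constraint only ever expresses ``these two nearby vertices must differ'', so coercing a long gadget to commit to a single bin color requires a carefully layered propagation gadget whose \emph{only} valid colorings are the $k$ honest ones; any leftover slack — a gadget using two bin colors, or a spacer color sneaking into a selector position — would amount to splitting an item between bins and would destroy the equivalence. One must also ensure, in the counting, that the contribution of~$b_j$ to the bin-color totals is not cancelled by the unavoidable ``background'' use of the other bin colors inside the propagation windows, which is what dictates the precise window gadget and the precise demand values. Establishing the equivalence then splits into: (forward) reading a packing off a bin assignment, coloring each~$P_j$ in its honest way, and filling every remaining position together with the padding by spacer colors in a $d$-distance-legal fashion; and (backward) arguing that the honest colorings are the only ones, so that any valid \DPED coloring yields a valid packing.
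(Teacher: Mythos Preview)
Your reduction has a concrete counting gap that is not a matter of tuning window lengths. The propagation mechanism you sketch --- a chain of selector vertices, each forced onto the chosen bin color because $k-1$ intermediate ``propagator'' vertices exhaust the other $k-1$ bin colors --- is essentially the only way to pin two non-precolored vertices to the \emph{same} color using a constraint that can only say ``nearby vertices are pairwise distinct''. But then every propagation step necessarily consumes each non-chosen bin color exactly once. If the gadget~$P_j$ has $a_j$ selectors and $a_j-1$ propagation steps, the chosen color~$B_{b_j}$ occurs $a_j$ times while every other bin color occurs $a_j-1$ times; the excess is~$1$, independent of~$a_j$. Summing over all items gives
\[
|\gamma^{-1}(B_i)| \;=\; \sum_{j:\,b_j=i} a_j \;+\; \sum_{j:\,b_j\neq i}(a_j-1)\;=\; kB - m + \bigl|\{j : b_j = i\}\bigr|,
\]
an affine function of the \emph{number of items} assigned to bin~$i$, not of the load $\sum_{j:b_j=i}a_j$. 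Fixing the demands therefore constrains item counts per bin --- a problem solvable in polynomial time --- and does not encode \textsc{Bin Packing}. The ``background use of the other bin colors'' you flag is not a nuisance to be tuned away; it is intrinsic to the propagation and exactly cancels the size information you need.

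The paper sidesteps this via a two-step route. It first proves that \textsc{List Coloring with Demands} on paths is \Wone-hard in the number of colors by reducing from \textsc{Multidimensional Subset Sum}: each item there has only a \emph{binary} choice (take it or not), so the item gadget has just two rigid list colorings and needs no equality propagation, and the per-color counts in the two modes differ by exactly the item's coordinate vector. The reduction is moreover arranged so that the lists are \emph{non-alternating} along the path (no color leaves a list and re-enters at the next vertex). In the second step, this non-alternating property lets one push the forbidden colors onto the edges and then simulate each list by precoloring $2|C|$ auxiliary vertices between consecutive ``main'' vertices at distance $d=2|C|+1$, yielding a \DPED instance with $O(|C|)$ colors and distance~$O(|C|)$. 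Neither step ever needs to force two non-precolored vertices to share a color.
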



We first show \Wone-hardness of a related coloring problem, namely \textsc{List Coloring with Demands}. 
We define the problem formally as follows.

\defproblem{\textsc{List Coloring with Demands (LCD)}}
{A graph $G = (V,E)$ on $v$ vertices, a set of colors $C$, a function $L\colon V \to 2^C$ that assigns a list of colors~$L(v)$ to each vertex, and a demand function $\eta \colon C \to [n]$.}
{Is there an $L$-list coloring $\gamma \colon V \to C$ that satisfies the demands~$\eta$?}

An instance $(G, C, \eta, L)$ of \LCD is a \emph{path instance} if the graph $G$ is a disjoint union of paths.
Moreover, a path instance of \LCD is \emph{non-alternating} if there does not exist a color $c \in C$ and three consecutive vertices $v_{i-1}, v_{i}, v_{i+1}$ on some path such that $c \notin L(v_{i})$ and simultaneously $c \in L(v_{i-1}) \cap L(v_{i+1})$.

The \LCD problem has already been shown to be \Wone-hard on paths with respect to the number of colors by Gomes, Guedes, and dos Santos~\cite{GomesGS23} albeit under a different name of \textsc{Number List Coloring}.
However, the produced instances of \LCD therein are very far from having the non-alternating property that is crucial in proving Theorem~\ref{thm:DPED-Wh}.
We devise a novel reduction from a multidimensional version of the subset problem that, moreover, produces non-alternating path instances of \LCD.
Due to the space constraints, we postpone the proof to Appendix~\ref{apx:LCD-W1h}.


%


\begin{theorem}
\label{thm:LCD-W1h}
\LCD is \Wone-hard with respect to the number of colors even when restricted to non-alternating path instances.
\end{theorem}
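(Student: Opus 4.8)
The plan is to reduce from the \MSS problem (\emph{multidimensional subset sum}): given $n$ vectors $s_1,\dots,s_n\in\mathbb{N}^k$ and a target $t\in\mathbb{N}^k$, all entries given in unary, decide whether some $I\subseteq[n]$ satisfies $\sum_{i\in I}s_i=t$; this problem is \Wone-hard parameterized by the dimension $k$ (established separately). From such an instance I will build, in polynomial time, a non-alternating path instance $(G,C,\eta,L)$ of \LCD whose colour set has size $O(k)$ --- crucially independent of $n$ and of the magnitudes of the numbers --- so that a bound on $k$ yields a bound on $|C|$, which is exactly what is needed for a parameterized reduction from $k$ to the number of colours. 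The colours are one ``occupancy'' colour $y_j$ for each coordinate $j\in[k]$ together with a constant number of auxiliary colours used only for synchronisation and padding.

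The graph $G$ is a disjoint union of paths $\Pi_1,\dots,\Pi_n$ (plus a few fixed auxiliary paths), where $\Pi_i$ is a \emph{selection gadget} for item $i$: a concatenation of $k$ \emph{coordinate blocks} $B_{i,1},\dots,B_{i,k}$, glued together by short \emph{synchronisation connectors} and surrounded by padding. The lists are designed so that every valid list colouring of $\Pi_i$ falls into one of two regimes, ``$i$ is selected'' and ``$i$ is rejected'', with two properties: (a) passing $\Pi_i$ between its two regimes changes, for each $j$, the number of vertices coloured $y_j$ by an amount reflecting the weight $s_{i,j}$, while leaving the usage of every auxiliary colour unchanged; and (b) the connector between $B_{i,j}$ and $B_{i,j+1}$ forces these two blocks into the same regime, so that the regime of $\Pi_i$ is well defined and uniform across all coordinates. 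Choosing the demand of $y_j$ to be $t_j$ plus the total rejecting-regime (baseline) usage of $y_j$, and the demand of each auxiliary colour to be its forced count, a list colouring meeting all demands exists if and only if the set $I=\{i:\Pi_i\text{ is in the selecting regime}\}$ satisfies $\sum_{i\in I}s_{i,j}=t_j$ for every $j$, i.e.\ if and only if the \MSS instance is a yes-instance. One direction is immediate (turn a solution $I$ into the obvious colouring); for the other, one uses (a)--(b) to argue that every valid colouring assigns each $\Pi_i$ a regime, and hence yields a feasible $I$.

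I expect the main obstacle to be the explicit design of the coordinate blocks and connectors: one must realise them with only $O(1)$ colours per coordinate while \emph{simultaneously} (i) making the block rigid enough that its $y_j$-count is determined within each regime (so that the induced constraint on $\#y_j$ is really the subset-sum constraint and not a weaker ``transportation'' one), and (ii) keeping the instance non-alternating, that is, ensuring that no colour lies in the lists of two vertices at distance two but not in the list of the vertex between them. These two goals pull against each other, since rigidity favours long runs of two-element lists whereas the non-alternating condition forbids a colour from ``skipping'' a vertex along a path; accordingly the blocks, connectors and padding must be laid out so that along every path each colour occurs in consecutive runs. This is also the reason a new construction is required rather than reusing the reduction of Gomes, Guedes and dos Santos~\cite{GomesGS23}, whose instances are far from non-alternating. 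Once the gadget is in place, the rest --- combining gadgets into one instance, computing the demands, both directions of correctness, and the polynomial-time and size bookkeeping --- is routine; the heart of the argument is the case analysis showing that a colouring of a gadget cannot ``mix'' its two regimes.
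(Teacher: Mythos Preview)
Your plan is correct and essentially matches the paper's: both reduce from \MSS, use $k$ coordinate colours plus $O(1)$ auxiliaries, and build one path gadget per item with exactly two ``regimes'' whose choice adds either $s_{i,j}$ or $0$ to the count of the $j$th coordinate colour. The paper's gadget is simpler than the block-and-connector architecture you anticipate: it puts a single universal colour $a$ in \emph{every} two-element list (so the non-alternating condition holds automatically for $a$, and the remaining colours are placed so that no colour recurs within any window of three consecutive lists); the demand on $a$ forces it onto exactly half the vertices of each gadget, which in a proper colouring can only be all the odd or all the even positions, and this single parity choice \emph{is} the regime --- no separate synchronisation connectors are needed, and the obstacle you flag dissolves once you commit to this ``universal colour in every list'' device.
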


Theorem~\ref{thm:DPED-Wh} is obtained through a polynomial-time reduction from \LCD on non-alternating path instances.
Crucially, both the distance parameter and the number of colors in the produced \DPED instance are linear in the number of colors in the original \LCD instance.

\begin{proof}[Proof of Theorem~\ref{thm:DPED-Wh}]
For technical reasons, we first modify the \LCD instance to guarantee that it is a single path, and both of its endpoints have a list identical to its only neighbor~$v$.
That is achieved by introducing two extra colors $a, b$ and concatenating all paths in arbitrary order with two extra vertices in between consecutive pairs and two extra vertices at the very beginning and end where we add $\{a,b\}$ to the list of all vertices and set $L(v) =\{a,b\}$ for each new vertex.
Finally, we set $\eta(a) = \eta(b) = p+1$ where $p$ is the total number of paths in~$G$.
It is easy to see that the modified instance is equivalent.
Furthermore, the modified instance remains non-alternating because there are two extra new vertices between any pair of original paths, and the colors $a, b$ are included in the list of every vertex.

From now on, we assume that $G$ is a single path with vertices $v_1, \dots, v_n$ and edges $e_1, \dots, e_{n-1}$ such that $e_i = \{v_i, v_{i+1}\}$ for each $i \in [n -1]$.
Moreover, we have $L(v_1) = L(v_2)$ and $L(v_{n-1}) = L(v_n)$.

Now, we show that the lists of such a non-alternating \LCD path instance can be alternatively represented by lists of forbidden colors for each edge.
It is crucial that the instance of \LCD is non-alternating, as otherwise, this may not be possible.
Moreover, we will also guarantee that any fixed color can be contained in the forbidden sets of at most two consecutive edges.

\begin{claim}
	\label{claim:colors-on-edges}
	We can compute in polynomial time an assignment of color sets to edges $F\colon E(G) \to 2^C$ such that for every vertex $v$, we have $L(v) = C \setminus (F(e) \cup F(e'))$ where $e, e'$ are the two edges incident to~$v$, or $L(v) = C \setminus F(e)$ if $v$ is incident to a single edge~$e$.
	Moreover, there are no three consecutive edges $e_{i-1}$, $e_i$ and $e_{i+1}$ such that the intersection $F(e_{i-1}) \cap F(e_i) \cap F(e_{i+1})$ is non-empty.
\end{claim}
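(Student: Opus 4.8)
The plan is to treat each color separately and then glue the pieces together. Fix a color $c \in C$ and set $S_c = \{\, i \in [n] : c \notin L(v_i)\,\}$. The first step, and the one where the hypotheses are genuinely used, is a structural observation: because the instance is non-alternating and $L(v_1) = L(v_2)$ and $L(v_{n-1}) = L(v_n)$, every maximal block of consecutive indices inside $S_c$ has length at least two. Indeed, a length-one block strictly inside $\{2,\dots,n-1\}$ is exactly an alternating triple and is forbidden, while a length-one block at position $1$ (resp.\ $n$) is ruled out because $1 \in S_c \iff 2 \in S_c$ (resp.\ $n \in S_c \iff n-1 \in S_c$). Without this, the desired edge representation cannot exist at all, since an isolated forbidden vertex would need a forbidden color on one of its two incident edges, contradicting that both its neighbours allow that color.

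Given that, I would build $F$ block by block. For a maximal block $[a,b] \subseteq S_c$ (so $v_a, \dots, v_b$ all miss $c$ and $b \ge a+1$), declare $c \in F(e_j)$ for exactly those $j$ with $a \le j \le b-1$ and ($j-a$ even or $j = b-1$) --- that is, every second ``inner'' edge of the block counting from the left, together with the last inner edge $e_{b-1}$. Verifying $L(v) = C \setminus (F(e)\cup F(e'))$ then splits into two inclusions. If $v_i$ misses $c$ it lies in some block $[a,b]$: $v_a$ is covered because $e_a$ is always selected, $v_b$ because $e_{b-1}$ is always selected, and for $a < i < b$ one of the two consecutive offsets $i-1-a$ and $i-a$ is even, so one of its incident inner edges is selected. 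Conversely, if $c \in F(e_j)$ then $j$ and $j+1$ both lie in the block that caused the selection, hence $v_j$ and $v_{j+1}$ both miss $c$; and a vertex outside every block has no selected incident edge for $c$ because neither incident edge can have both its endpoints in a block. For the two path endpoints one argues directly: $L(v_1)=L(v_2)$ forces $v_1 \in S_c \iff v_1$ lies in a block, which matches whether $e_1$ is selected, and symmetrically at $v_n$.

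For the final ``no three consecutive edges'' requirement: inside one block, three selected consecutive edges $e_j, e_{j+1}, e_{j+2}$ would make the middle offset $j+1-a$ odd, so $e_{j+1}$ could only be selected via the clause $j+1 = b-1$, forcing $e_{j+2} = e_b$, which is never an inner edge of that block and hence never selected for $c$ by it --- a contradiction. Across two blocks of the same color, the selected edge indices of the left block lie in $[a,b-1]$ and those of the right block in $[a',b'-1]$ with $a' \ge b+2$ (maximal blocks of $S_c$ are separated by at least one vertex), so the indices $b$ and $b+1$ are unselected between them; hence no color appears in three consecutive sets $F(\cdot)$. The whole construction is one left-to-right scan per color, so it runs in polynomial time. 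The only real obstacle is keeping the boundary bookkeeping straight --- distinguishing a block's inner edges $e_a, \dots, e_{b-1}$ from the non-inner edges $e_{a-1}, e_b$, and the special treatment of blocks touching $v_1$ or $v_n$ --- and making sure that is exactly where the non-alternating hypothesis and the two endpoint equalities enter; everything else is routine parity arithmetic.
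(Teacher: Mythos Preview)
Your proof is correct. Interestingly, your construction actually coincides with the paper's: if one unwinds the paper's inductive rule inside a maximal forbidden block $[a,b]$ for a colour $c$, condition~(i) (``$c \notin F(e_{i-1})$'') produces precisely the alternating selection $e_a, e_{a+2}, e_{a+4}, \dots$ from the left boundary, while condition~(ii) (``$c \in L(v_{i+2})$'') forces $e_{b-1}$ to be selected regardless of parity --- exactly your rule ``$j-a$ even or $j=b-1$''. So the two arguments build the same $F$; the difference is presentational. Your per-colour block decomposition makes the parity structure and the role of the non-alternating hypothesis (and of the endpoint equalities $L(v_1)=L(v_2)$, $L(v_{n-1})=L(v_n)$) fully explicit, whereas the paper's left-to-right definition is more compact and treats all colours simultaneously without first locating blocks. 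One small point worth spelling out: your claim that three consecutive selected inner edges force the middle offset $j+1-a$ to be odd tacitly uses that $e_{j+2}$ being an inner edge gives $j+2 \le b-1$, hence $j<b-1$, so $e_j$ can only be selected via the even-offset clause; making that step explicit would remove any appearance of a gap.
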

\begin{claimproof}
	We define the assignment inductively along the path.
	We set $F(e_1) = C \setminus L(v_1)$ and $F(e_{n-1}) = C \setminus L(v_n)$.
	Recall that we first modified the instance to guarantee that each endpoint of the path has an identical list to its neighbor, i.e., $L(v_1) = L(v_2)$ and $L(v_{n-1}) = L(v_n)$.
	For $i \in \{2, \dots, n - 2\}$,  we let $F(e_i)$ contain a color $c \in C$ if and only if $c \notin L(v_i) \cup L(v_{i+1})$ and moreover, either (i) $c \notin F(e_{i-1})$, or (ii) $c \in L(v_{i+2})$.
	See Figure~\ref{fig:edge-forbid}.
	Clearly, $F$ can be computed in polynomial time.
	\begin{figure}
		\centering
		\includegraphics[width=\linewidth]{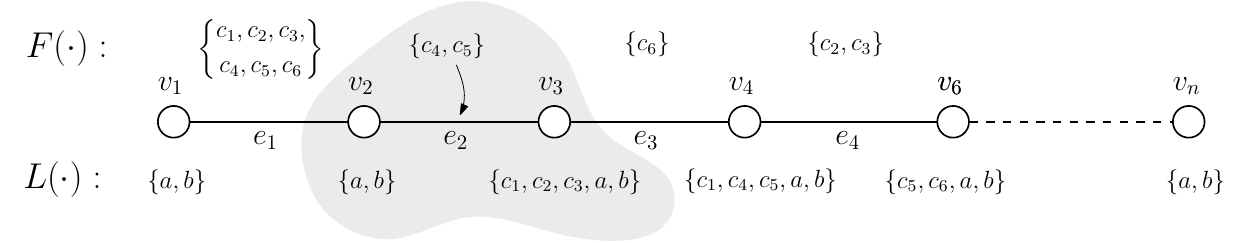}
		\caption{Representation of LCD with forbidden colors  when $C = \{c_1,c_2, \dots, c_6,a,b\}$.}
		\label{fig:edge-forbid}
	\end{figure}
	
	Let us verify the properties of~$F$.
	For the endpoints $v_1$ and $v_n$, we have $L(v_1) = C \setminus F(e_1)$ and $L(v_n) = C \setminus F(e_{n-1})$ by definition. 
	Now, fix arbitrary $i \in \{2, \dots, n - 1\}$ and a color $c \in C$.
	If we have $c \in L(v_i)$, then neither $F(e_{i-1})$ nor $F(e_i)$ contain $c$ by definition and $c \in C \setminus (F(e_{i-1}) \cup F(e_i))$.
	On the other hand, if $c \notin L(v_i)$, it follows from the non-alternating property that either $c \notin L(v_{i-1})$ or $c \notin L(v_{i+1})$.
	We consider the three possible cases.
	First, assume that $c$ appears in none of the lists $L(v_{i-1})$, $L(v_i)$, and $L(v_{i+1})$.
	Then either $c \in F(e_{i-1})$ or we have $c \notin F(e_{i-1})$ which triggers the condition (i) in the definition of~$F(e_i)$ and either way, we get that $c \in F(e_{i-1}) \cup F(e_i)$.
	Second, assume that $c$ appears in the list $L(v_{i+1})$ but not in $L(v_{i-1})$.
	In this case, we have $c \in F(e_{i-1})$ by condition (ii) in the definition of $F(e_{i-1})$ and again, we see that $c \in F(e_{i-1}) \cup F(e_i)$.
	Finally, assume that $c$ appears in the list $L(v_{i-1})$ but not in $L(v_{i+1})$.
	In this case, we have $c \notin F(e_{i-1})$, which implies $c \in F(e_i)$ by condition~(i) in the definition.
	In all cases, we obtained  $c \in F(e_{i-1}) \cup F(e_i)$.
	
	Assume for a contradiction that there are three consecutive edges $e_{i-1}$, $e_i$ and $e_{i+1}$ such that the intersection $F(e_{i-1}) \cap F(e_i) \cap F(e_{i+1})$ is non-empty and let $c$ be an arbitrary color in the intersection.
	However, observe that in the definition of $F(e_i)$ neither condition (i) nor (ii) holds for~$c$ and thus, we must have $c \notin F(e_i)$.
\end{claimproof}

With this assignment~$F$, we are finally ready to define the path instance $(G', C', d, \gamma', \eta')$ of \DPED.
Let $t$ denote the size of the color set~$C$ and let $C = \{c_1, \dots, c_t\}$ be an arbitrary enumeration of the colors.
We set $d = 2t + 1$.
We take the graph $G'$ to be a path of length $nd + 1$ on vertices $v'_1, \dots, v'_{nd  + 1}$ in this order along the path.
To simplify the exposition, we partition the vertices into two distinct sets.
Every vertex $v'_{(k-1) d + 1}$ for some $k \in [n]$ is a \emph{main vertex} and we denote it alternatively by~$x_k$.
All the remaining vertices are \emph{auxiliary}.
For $i \in [n-1]$, $j \in [t]$ and $\alpha \in [2]$, the vertex $v'_{(i-1) d + 2j + \alpha - 1}$ is an auxiliary vertex \emph{associated to color~$c_j$} and we alternatively denote it by $y^i_{j, \alpha}$.
Observe that every two consecutive main vertices $x_i$ and $x_{i+1}$ are separated by exactly $2t$ auxiliary vertices that are grouped into pairs associated to colors $c_1, \dots, c_t$ in this order.
More specifically, the segment between $x_i$ and $x_{i+1}$ contains the auxiliary vertices $y^i_{j,\alpha}$ for all $j \in [t]$ and $\alpha \in [2]$, ordered lexicographically by the pairs $(j, \alpha)$.
See Figure~\ref{fig:lcdtodped}.

\begin{figure}
	\centering
	\includegraphics[width=\linewidth]{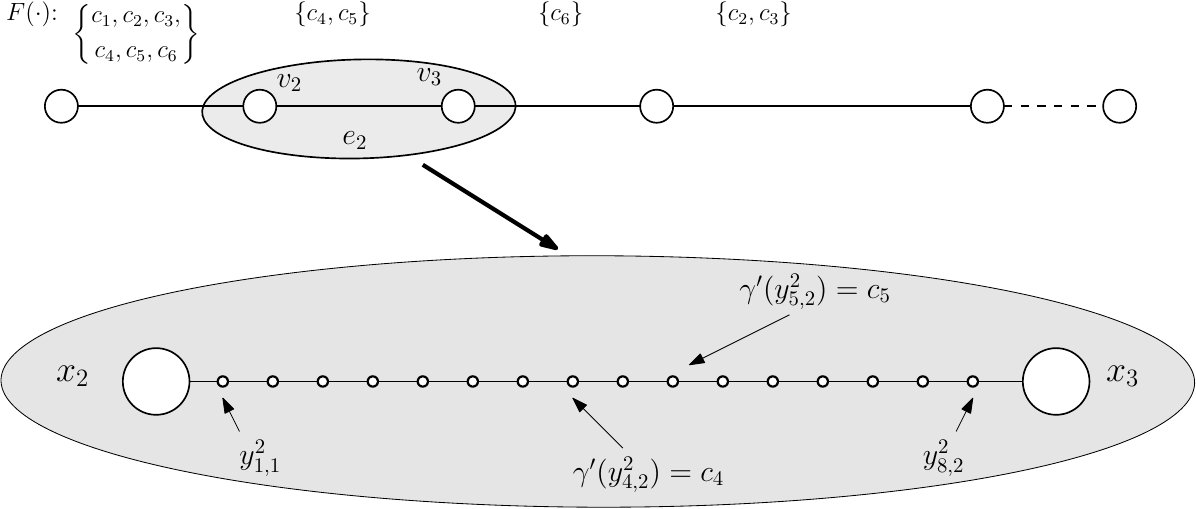}
	\caption{Construction of $G'$ from $G$. Extending from Figure $2$, $\gamma'(y^2_{j,\alpha})=\gamma^\star(y^2_{j,\alpha})$, except for $y^2_{4,2}$ and $y^2_{5,2}$.}
	\label{fig:lcdtodped}
\end{figure}

Our goal is to define a pre-coloring of all the auxiliary vertices such that there is a one-to-one correspondence between feasible colorings of the main vertices and feasible colorings of the original \LCD instance.
As a final ingredient, we need to be able to assign extra colors to certain vertices without affecting the rest of the instance.
To that end, we set $C' = C \cup C^\star$ where $C^\star = \{c^\star_0, \dots, c^\star_d\}$ is a set of \emph{auxiliary colors} disjoint from~$C$.
Moreover, we define an auxiliary coloring $\gamma^\star\colon V(G')\to C^\star$ where $\gamma^\star(v'_i) = c^\star_{i \bmod (d+1)}$ for each $i \in [nd + 1]$.
Informally, the coloring~$\gamma^\star$ simply cyclically iterates through the sequence of auxiliary colors $c^\star_0, \dots, c^\star_d$ along the path, starting with $c^\star_1$.
Clearly, $\gamma^\star$ is a proper $d$-distance coloring of~$G'$.

We define the pre-coloring $\gamma'$ of every auxiliary vertex.
For $i \in [n-1]$ and $j \in [t]$, we set
\begin{align*}
	\gamma'(y^i_{j,1}) &= \begin{cases}
		c_j &\text{if $c_j \in F(e_i)$ and  $c_j \notin F(e_{i-1})$.}\\
		\gamma^\star(y^i_{j,1}) &\text{otherwise.}
	\end{cases}\\
	\gamma'(y^i_{j,2}) &= \begin{cases}
		c_j &\text{if $c_j \in F(e_i) \cap F(e_{i-1})$.}\\
		\gamma^\star(y^i_{j,2}) &\text{otherwise.}
	\end{cases}
\end{align*}
where we additionally define $F(e_0) = \emptyset$ for simplicity.

First, observe that the pre-coloring $\gamma'$ exactly encodes the assignment of forbidden colors to edges.
\begin{observation}
	\label{obs:gamma-F}
	A color $c_j \in C$ appears in the pre-coloring $\gamma'$ on the segment between main vertices $x_i$ and $x_{i+1}$ if and only if $c_j \in F(e_i)$.
\end{observation}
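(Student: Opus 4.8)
The plan is simply to unwind the definitions. Fix a color $c_j \in C$ and an index $i \in [n-1]$. The vertices lying strictly between the main vertices $x_i$ and $x_{i+1}$ are exactly the $2t$ auxiliary vertices $y^i_{j',\alpha}$ with $j' \in [t]$ and $\alpha \in [2]$, so the statement is really about which of these receive the color $c_j$ under the pre-coloring~$\gamma'$. The first step is to notice that only two of them are candidates: a vertex $y^i_{j',\alpha}$ with $j' \neq j$ can never carry $c_j$, because by construction $\gamma'(y^i_{j',\alpha})$ is either $c_{j'}$ or $\gamma^\star(y^i_{j',\alpha})$, and the latter lies in $C^\star$, which is disjoint from $C$; since the colors of $C$ are pairwise distinct, neither value equals $c_j$. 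Hence $c_j$ appears on the segment if and only if $\gamma'(y^i_{j,1}) = c_j$ or $\gamma'(y^i_{j,2}) = c_j$.

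Next I would do a short case analysis on membership of $c_j$ in $F(e_i)$. If $c_j \notin F(e_i)$, then both defining clauses — the one for $\gamma'(y^i_{j,1})$ and the one for $\gamma'(y^i_{j,2})$ — have $c_j \in F(e_i)$ among their hypotheses and therefore fail, so both vertices receive their $\gamma^\star$-colors and $c_j$ does not occur on the segment. If $c_j \in F(e_i)$, I distinguish further according to whether $c_j \in F(e_{i-1})$ (using the convention $F(e_0) = \emptyset$): when $c_j \notin F(e_{i-1})$ the first clause fires and $\gamma'(y^i_{j,1}) = c_j$, while when $c_j \in F(e_{i-1})$ the second clause fires and $\gamma'(y^i_{j,2}) = c_j$. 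In either subcase $c_j$ appears, which together with the previous step gives the claimed equivalence.

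There is no real obstacle here — the argument only records how $\gamma'$ was defined. The two mildly delicate points are that $\gamma^\star$ takes values outside $C$ (so all auxiliary vertices other than $y^i_{j,1}$ and $y^i_{j,2}$ are automatically irrelevant to the occurrence of $c_j$), and that the $\alpha=1$ and $\alpha=2$ clauses, taken together, are designed precisely so that their disjunction is the event $c_j \in F(e_i)$, split according to whether $c_j$ also lies in $F(e_{i-1})$.
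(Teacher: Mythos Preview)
Your proposal is correct and is exactly the intended justification: the paper states this as an observation without proof because it follows immediately from the two-case definition of $\gamma'(y^i_{j,1})$ and $\gamma'(y^i_{j,2})$, whose disjunction is precisely $c_j \in F(e_i)$. Your remarks that $\gamma^\star$ takes values only in $C^\star$ and that vertices $y^i_{j',\alpha}$ with $j' \neq j$ cannot carry $c_j$ are the only points that need noting, and you handle them cleanly.
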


Moreover, we show that $\gamma'$ distributes the occurrences of a fixed color at a sufficient distance from each other.

\begin{claim}
	\label{claim:gammap-valid}
	The pre-coloring $\gamma'$ does not contain any monochromatic pair of vertices at a distance at most $d$.
\end{claim}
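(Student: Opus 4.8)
The plan is to argue that any two auxiliary vertices receiving the same color under $\gamma'$ must be far apart, by splitting into the case where the common color is auxiliary (lies in $C^\star$) and the case where it lies in the original palette $C$. For two vertices colored with the same auxiliary color $c^\star_r$: every vertex colored by an auxiliary color receives precisely the color $\gamma^\star(v'_i) = c^\star_{i \bmod (d+1)}$, so two such vertices $v'_i, v'_{i'}$ sharing a color satisfy $i \equiv i' \pmod{d+1}$, hence $|i - i'| \ge d + 1 > d$. This disposes of the auxiliary-color case immediately. It remains to handle a color $c_j \in C$; here I would use Observation~\ref{obs:gamma-F} together with the "no three consecutive edges" property from Claim~\ref{claim:colors-on-edges}.

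The key computation is to bound the distance between two vertices precolored $c_j$. First note that within a single segment between $x_i$ and $x_{i+1}$ the color $c_j$ can be assigned to at most one of the two vertices $y^i_{j,1}, y^i_{j,2}$: inspecting the definition of $\gamma'$, the vertex $y^i_{j,1}$ gets $c_j$ only when $c_j \in F(e_i) \setminus F(e_{i-1})$, while $y^i_{j,2}$ gets $c_j$ only when $c_j \in F(e_i) \cap F(e_{i-1})$, and these two conditions are mutually exclusive. So any two vertices precolored $c_j$ lie in distinct segments, say the segment after $x_i$ and the segment after $x_{i'}$ with $i < i'$. By Observation~\ref{obs:gamma-F} this forces $c_j \in F(e_i)$ and $c_j \in F(e_{i'})$. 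I would then take $i$ and $i'$ to be a \emph{consecutive} such pair (no segment strictly between them contains $c_j$); by the "no three consecutive edges with nonempty triple intersection" conclusion of Claim~\ref{claim:colors-on-edges}, we cannot have $i' = i + 1$ forcing… — more precisely, since $c_j \in F(e_i) \cap F(e_{i'})$ but $c_j \notin F(e_{i+1}) \cap \dots \cap F(e_{i'-1})$, and no color lies in three consecutive forbidden sets, we must have $i' \ge i + 2$, i.e. the two segments are separated by at least one whole intermediate segment.

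Finally I would convert this segment gap into a vertex-distance bound. Each segment between consecutive main vertices spans exactly $d$ edges of $G'$ (recall $x_k = v'_{(k-1)d+1}$), so the main vertices $x_i$ and $x_{i'}$ are at distance $(i' - i)d \ge 2d$ along the path. The two vertices precolored $c_j$ lie within their respective segments, hence each is within distance $d$ of the nearer main vertex $x_{i+1}$ or $x_{i'}$ respectively — actually the sharper bound is that the first lies in $\{x_i, \dots, x_{i+1}\}$ and the second in $\{x_{i'}, \dots, x_{i'+1}\}$, and since $i' \ge i+2$, even the closest endpoints $x_{i+1}$ and $x_{i'}$ are at distance $(i' - i - 1)d \ge d$ apart, with the two $c_j$-colored vertices strictly inside (not equal to both endpoints), so their distance is strictly greater than $d$, hence at least $d+1$. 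I would double-check the edge case $i' = i + 2$ carefully, since that is the tight case and the off-by-one reasoning there is where an error is most likely; the main obstacle is making this boundary count airtight rather than any conceptual difficulty.
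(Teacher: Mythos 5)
There is a genuine gap in the middle of your argument. The step ``since $c_j \in F(e_i) \cap F(e_{i'})$ \dots\ and no color lies in three consecutive forbidden sets, we must have $i' \ge i+2$'' does not follow: Claim~\ref{claim:colors-on-edges} only forbids a color from lying in \emph{three} consecutive sets $F(e_{i-1})\cap F(e_i)\cap F(e_{i+1})$, and it is entirely consistent with the construction that $c_j \in F(e_i) \cap F(e_{i+1})$ while $c_j \notin F(e_{i-1})$. In that situation $\gamma'$ places $c_j$ on $y^i_{j,1}$ (condition $c_j \in F(e_i)\setminus F(e_{i-1})$) and on $y^{i+1}_{j,2}$ (condition $c_j \in F(e_{i+1})\cap F(e_i)$), so two \emph{consecutive} segments do both contain $c_j$, and your subsequent distance bound, which rests on $i' \ge i+2$ and yields a gap of at least $(i'-i-1)d \ge d$ between the relevant main vertices, is built on a false premise.

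The claim is nevertheless true, but the saving grace is the placement \emph{within} the segments, not the segment gap. For the pair above one computes $y^i_{j,1} = v'_{(i-1)d+2j}$ and $y^{i+1}_{j,2} = v'_{id+2j+1}$, which are at distance exactly $d+1$ --- tight, but safe. The paper's proof therefore does a case analysis on the indices $\alpha,\alpha'$: it first notes that for $i<i'$ the vertices $y^i_{j,\alpha}$ and $y^{i'}_{j,\alpha'}$ are within distance $d$ only when $i'=i+1$ and $\alpha' \le \alpha$, then rules out $(\alpha,\alpha')=(1,1)$ directly from the definition of $\gamma'(y^{i+1}_{j,1})$ (it requires $c_j \notin F(e_i)$), and finally rules out $\alpha=2$ by observing that $\gamma'(y^i_{j,2})=c_j$ together with $\gamma'(y^{i+1}_{j,\alpha'})=c_j$ forces $c_j \in F(e_{i-1})\cap F(e_i)\cap F(e_{i+1})$, which is where the three-consecutive-edges property is actually used. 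Your proof needs to be repaired along these lines; the parity/modular argument for the auxiliary colors and the within-segment mutual exclusivity are fine as written.
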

\begin{claimproof}
	Assume for a contradiction that $\gamma'$ does contain such a pair of monochromatic vertices.
	We have already observed that $\gamma^\star$ is a proper $d$-coloring on a disjoint set of auxiliary colors and thus, this pair must use some color $c_j \in C$.
	We cannot have $\gamma'(y^i_{j,1}) = \gamma'(y^i_{j,2}) = c_j$ for any fixed $i,j$ since the respective conditions in the definition of $\gamma'$ are mutually exclusive.
	
	For any different $i, i' \in [n-1]$ with $i < i'$ and arbitrary $\alpha, \alpha' \in [2]$, the distance between $y^i_{j,\alpha}$ and $y^{i'}_{j, \alpha'}$ is at most~$d$ if and only if $i' - i = 1$ and $\alpha' \le \alpha$.
	We cannot have $\gamma'(y^i_{j,1}) = \gamma'(y^{i+1}_{j,1}) = c_j$ by definition of $\gamma'(y^{i+1}_{j,1})$.
	So the only option left is that $\gamma'(y^i_{j,2}) = \gamma'(y^{i+1}_{j,\alpha}) = c_j$ for either $\alpha \in [2]$.
	But in that case, we have $c_j \in F(e_{i-1}) \cap F(e_i)$ since $\gamma'(y^i_{j,2})= c_j$ and simultaneously, $c_j \in F(e_{i+1})$ since $\gamma'(y^{i+1}_{j,\alpha})= c_j$.
	Thus, we reached a contradiction since $F$ does not contain the same color in the sets $F(e_{i-1}), F(e_i), F(e_{i+1})$ of three consecutive edges.
\end{claimproof}

To finish the construction, it remains to define the demand function $\eta'$.
We set $\eta'(c_j) = \eta(c_j)$ for every original color $c_j \in C$ and we set $\eta'(c^\star_j) = 0$ for every auxiliary color $c^\star_j \in C^\star$.

\subparagraph{Correctness (``only if'').}
Suppose that  $(G, C, \eta, L)$ is a yes-instance of \LCD and let $\rho\colon V(G) \to C$ be an $L$-list coloring that meets the demands.
We define a coloring $\gamma\colon V(G') \to  C'$ by simply setting $\gamma(x_i) = \rho(v_i)$ for every main vertex~$x_i$ and $\gamma(y^i_{j,\alpha}) = \gamma'(y^i_{j,\alpha})$ for every auxiliary vertex~$y^i_{j,\alpha}$.
The coloring $\gamma$ is an extension of $\gamma'$ by definition, and it meets the demands because the demand functions $\eta$ and $\eta'$ are equal when restricted to the color set~$C$.
Thus, it remains to show that $\gamma$ is a proper $d$-distance coloring.

Assume for a contradiction that $\gamma$ contains a pair of monochromatic vertices $u$ and $w$ at a distance at most~$d$.
At least one of these vertices has to be a main vertex since the colors of all auxiliary vertices are fixed by~$\gamma'$ and there is no such monochromatic pair of pre-colored vertices by Claim~\ref{claim:gammap-valid}.
First, assume that both $u$ and $w$ are main vertices.
Two main vertices are in distance at most~$d$ if and only if they are consecutive, i.e., we have without loss of generality $u = x_i$ and $w = x_{i+1}$ for some $i \in [n-1]$.
But then they cannot share the same color since $\gamma(x_i) = \rho(v_i)$, $\gamma(x_{i+1}) = \rho(v_{i+1})$ and $\rho$ is a proper list-coloring of~$G$.
Otherwise, suppose without loss of generality that $u$ is a main vertex~$x_i$ and $w$ is an auxiliary vertex.
The coloring~$\gamma$ can use only the colors from~$C$ on the main vertices and thus, we have $\gamma(x_i) = \gamma(w) = c_j$ for some $c_j \in C$.
Moreover, the vertex $w$ lies either on the segment between $x_{i-1}$ and $x_i$ or on the segment between $x_i$ and $x_{i+1}$ as those are the only auxiliary vertices at a distance at most~$d$ from $x_i$.
Observation~\ref{obs:gamma-F} then implies that we have either $c_j \in F(e_{i-1})$ or $c_j \in F(e_i)$.
But we know that $L(x_i) = C \setminus (F(e_{i-1}) \cap F(e_i))$ by Claim~\ref{claim:colors-on-edges}.
In particular, the color~$c_j$ does not appear in the list $L(v_i)$, which is a contradiction with $\rho$ being a proper list-coloring since $\rho(v_i) = \gamma(x_i) = c_j$.

\subparagraph{Correctness (``if'').}
Now suppose that $(G', C', d, \gamma', \eta')$  is a yes-instance of \DPED and let $\gamma\colon V(G') \to C'$ be a witnessing $d$-distance coloring, i.e., $\gamma$ extends the pre-coloring~$\gamma'$ and meets the demands given by~$\eta'$.
We define a coloring $\rho\colon V(G) \to C$ by simply restricting~$\gamma$ to the main vertices, that is we set $\rho(v_i) = \gamma(x_i)$.
First, observe that the range of~$\rho$ is indeed only the color set~$C$ since the demand $\eta'(c^\star_j)$ of any auxiliary color is zero and thus, $\gamma$ uses only the colors from~$C$ on the main vertices.
Moreover, we have $\rho(v_i) \neq \rho(v_{i+1})$ for every $i \in [n-1]$ because the main vertices $x_i$ and $x_{i+1}$ are at distance~$d$ in~$G'$.
The demands are also satisfied by~$\rho$ because the demand functions $\eta$ and $\eta'$ are equal when restricted to the color set~$C$.

It remains to check that $\rho$ is an $L$-list coloring, i.e., we have $\rho(v_i) \in L(v_i)$ for all $i \in [n]$.
Assume for a contradiction that this does not hold for a vertex~$v_i$ with $\rho(v_i) = \gamma(x_i) = c_j$.
By Claim~\ref{claim:colors-on-edges}, the color $c_j$ belongs to at least one of the sets $F(e_{i-1})$ and $F(e_i)$.
Observation~\ref{obs:gamma-F} implies that the color~$c_j$ is used by~$\gamma$ on some auxiliary vertex from the segment between $x_{i-1}$ and $x_{i+1}$.
But all these vertices are at a distance at most~$d$ from~$x_i$ and we reach a contradiction with $\gamma$ being a proper $d$-distance coloring.
\end{proof}


\section{Few precolored vertices}
\label{sec:fewprecolored}
In this section, we consider the regime when we are given only a few precolored vertices.
We present two effective algorithms for \DPED in this regime.
We first show that an approximation FPT algorithm with small additive error is possible even in the case of many colors.
Afterwards, we present an exact FPT algorithm when the number of colors is small as well.

\subsection{Approximation}
\label{subsec:approx}

Next, we show there is an approximation algorithm for \DPED with additive error,
which is bounded by a polynomial function of the distance parameter and the number of colors.
By having additive error $\alpha$ we mean that for the set of colors $C$ and demands $\eta\colon C\to[n]$
the algorithm produces coloring $\gamma\colon V\to C$ such that
$$
\sum_{c\in C} \left| \eta(c) - |\gamma^{-1}(c) \setminus A| \right| \le \alpha.
$$

\begin{theorem}
	Let us have an instance of \DPED on path $P$ of length $n$ with $p$ precolored vertices, and let $c=|C|\ge d+2$.
	There is an approximation algorithm that runs in time $pd^{O(d)} + O(n\log c)$ and outputs a coloring with an additive error at most $O(pd^2)$.
\end{theorem}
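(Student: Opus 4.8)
The plan is to confine the difficulty of the precoloring to a small neighborhood of the $p$ precolored vertices: the $O(pd)$ vertices lying close to a precolored vertex are colored by a small dynamic program, the long uncolored stretches that are far from every precolored vertex are colored by the greedy algorithm of Theorem~\ref{thm:greedy-algo}, and the whole additive error is charged to the $O(p)$ interfaces between these two parts.

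\textbf{Decomposition.} Call an uncolored vertex \emph{near} if it is within distance $d$ of some precolored vertex and \emph{far} otherwise; there are at most $2dp$ near vertices. Merging precolored vertices whose radius-$d$ neighborhoods overlap in a chain, the near vertices together with the precolored vertices form at most $p$ contiguous \emph{zones}, separated by the maximal runs of far vertices, which we call the \emph{free segments} (at most $p+1$ of them). A free segment is at distance more than $d$ from every precolored vertex, so once we fix the colors of the $\le d$ zone-vertices flanking it on each side it becomes exactly an end-precolored instance of \DPED in the sense of Section~\ref{sec:endprecolor}, to which Theorem~\ref{thm:greedy-algo} applies.

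\textbf{Coloring the zones.} We color the near vertices zone by zone using a left-to-right dynamic program in the spirit of Proposition~\ref{pro:dp}, whose state is the sequence of colors used on the last $d$ vertices. The point that keeps the state space small is that any window of $d$ consecutive vertices can be incident to at most $O(d)$ colors that are \emph{relevant} --- i.e.\ appear either in the precoloring or on an already-colored near vertex within distance $d$ --- and that all remaining colors are mutually interchangeable ``fresh'' colors, of which we ever need only $O(d)$ because $c \ge d+2$. Thus each zone has $d^{O(d)}$ states per position, and since there are $O(pd)$ near vertices in total the dynamic program runs over all zones in time $pd^{O(d)}$; we let it keep track of the colors each zone imposes on the two adjacent free-segment ends, so that a globally consistent choice can be selected afterwards within the same time bound, preferring colorings that use colors of high residual demand first.

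\textbf{Coloring the free segments and bounding the error.} For each resulting choice of zone-boundary colorings we run the greedy algorithm of Theorem~\ref{thm:greedy-algo} on each free segment, with the residual demands distributed greedily from left to right across the segments and the generic fresh placeholders instantiated by hitherto unused colors of positive residual demand; this phase costs $O(n\log c)$ and, by Theorem~\ref{thm:greedy-algo}, is exact on each free segment once its boundary and demand budget are fixed. To argue the additive error is $O(pd^2)$ we compare with an optimal coloring $\gamma^\star$, should one exist (otherwise a valid $d$-distance extension of the precoloring does not exist and the algorithm reports this): the restriction of $\gamma^\star$ to the near vertices, up to renaming fresh colors, is one of the colorings our zone dynamic program can produce, and with the corresponding boundary fixed $\gamma^\star$ itself certifies that the residual demand is realizable on the free segments, so the total loss is accounted for by (i) the $O(pd)$ near vertices on which our choice may differ from $\gamma^\star$, and (ii) at each of the $O(p)$ interfaces the $O(d)$ ``constrained'' colors whose split between the two sides our left-to-right allocation may misjudge by $O(d)$ units each. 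The main obstacle is precisely this last exchange argument: one must establish a robust form of the correctness of Theorem~\ref{thm:greedy-algo}, showing that if a free segment's boundary and its demand budget are within $O(d)$ units per color of what some valid coloring uses then the greedy still realizes all but $O(d)$ units per color, and that these per-segment slacks do not compound when the residual demand is routed simultaneously through all $O(p)$ free segments --- an argument in the spirit of the proof of Theorem~\ref{thm:greedy-algo} but carried out across all free segments at once.
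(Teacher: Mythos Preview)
Your plan is structurally the reverse of the paper's, and the step you flag as ``the main obstacle'' is exactly where the argument is missing. By coloring the zones first and then running the greedy algorithm separately on each of the $O(p)$ free segments, you are forced to split the global demand vector across those segments. You propose to do this ``greedily from left to right'', but Theorem~\ref{thm:greedy-algo} is an exactness statement for a single end-precolored instance whose demands sum to its length; it says nothing about how close greedy comes when handed an over- or under-specified demand vector. Your proposal does not prove the ``robust form'' of Theorem~\ref{thm:greedy-algo} it invokes, nor does it give any mechanism preventing the per-segment discrepancies from compounding additively across all $O(p)$ segments. Without that, the bound $O(pd^2)$ is not established.

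The paper sidesteps this entirely by inverting the order of operations. It first runs Algorithm~\ref{apxalgo} once on the whole path, ignoring the precoloring; this is a single end-precolored instance (with empty end-segments), so no demand-splitting problem arises. Only then does it erase the colors in a neighborhood of radius $b = 2(d+1)^2$ around each precolored vertex and refill those $O(pd^2)$ positions. The refilling uses no DP in the generic case: the paper gives an explicit block-shifting construction that, over $2(d+1)$ consecutive blocks of size $d+1$, morphs the surviving greedy coloring on the outside into a coloring compatible with the precolored vertex on the inside, using one spare color (this is precisely where $c \ge d+2$ is used). The additive error is then simply bounded by the total number of erased positions, $O(pd^2)$, with no exchange argument required. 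A DP of cost $pd^{O(d)}$ is invoked only for precolored vertices whose $b$-neighborhoods overlap, i.e.\ when $c < 2d+1$.
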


\begin{proof}
	The approximation algorithm is as follows.
	We ignore all precolored vertices and we run the greedy Algorithm~\ref{apxalgo} to color the path $P$.
	Let the precolored vertices be denoted $w_1,\dots,w_p$, respectively, along the path $P$, which
	consists of vertices $v_1,\dots,v_n$ from left to right.
	Let $b=2(d+1)^2$.
	Then we remove the colors in the neighborhood of diameter $b$ around each precolored vertex $w_i$
	(that is, for the vertices $v_j$ whose distance to some $w_i$ is at most $b$).
	The idea is now to reassign the colors to these discolored vertices while introducing only a small additive error.

	First we handle all pairs of vertices $(w_i,w_{i+1})$ such that $\dist(w_i, w_{i+1}) \le 2b$ as follows.
	If $|C|\ge 2d+1$, it is easy to
	find the coloring of the gap between $w_i$ and $w_{i+1}$ in a greedy manner.
	If $|C| < 2d+1$, we produce $d$-distance coloring of each segment of consecutive gaps of size at most~$2b$ by dynamic programming in time $pd^{O(d)}$.
	We may thus further assume that all remaining uncolored gaps between pairs of vertices $(w_i,w_{i+1})$
	have $\dist(w_i,w_{i+1})> 2b$.
	For a precolored vertex $w_i$, which equals to some $v_j$, let us denote by \emph{left neighborhood of $w_i$}
	those vertices $v_k$, where $\dist(v_k,v_j)\le b$ and $k<j$.
	Similarly, we define the right neighborhood as vertices $v_k$ where $\dist(v_k,v_j)\le b$ and $k>j$.
	We now describe the procedure that colors the diameter $b$ neighborhood of each $w_i$, first the left
	neighborhood and then the right neighborhood (provided that the respective neighborhood is still uncolored).
	Observe that if the right neighborhood of~$w_i$ and left neighborhood of~$w_{i+1}$ are both uncolored then they are disjoint as $\dist(w_i,w_{i+1})> 2b$.
	We split the $b$ vertices of the left neighborhood into $2(d+1)$ blocks $B_1,\dots,B_{2(d+1)}$ of size $d+1$.
	Let us denote the $d+1$ vertices immediately preceding block $B_1$ as $B_0$
	and let the $d+1$ vertices after $B_{2(d+1)}$ be denoted as $B_{2d+3}$.
	Here, the vertex $w_i$ is the first vertex of $B_{2d+3}$ and we choose the coloring of $B_{2d+3}$
	to consist of the precoloring of $w_i$ followed by arbitrarily chosen valid coloring of the rest of $B_{2d+3}$.
	Denote by $B_i^j$ the color of the vertex $u_j$ of the block $B_i=u_1,\dots,u_{d+1}$.

	We proceed by partial modification of the coloring of each block $B_i$ into the block $B_{i+1}$, for $i=0,\dots,2d+1$.
	Let the coloring of $B_0$ be, without loss of generality, $0,1,\dots,d$.
	For the block $B_1$, we copy the coloring from $B_0$, where we remove the
	occurrence of color $B_{2d+3}^1$, shift the colors to the right, and introduce a new first color to be $d+1$ (that is, an extra color).
	For the block $B_2$, we copy the coloring from $B_1$, where we replace the first color $d+1$ into $B_{2d+3}^1$.
	Note that by these two steps, we did not create any color conflict for vertices closer than $d$.
	For $i=2,\dots,d+1$ we now continue in a similar manner:
	let the coloring of the block $B_{2i-1}$ be copied from the block $B_{2i-2}$,
	where we remove the occurrence of the color $B_{2d+3}^i$, shift the colors to the right
	and set $B_{2i-1}^i$ to $d$. The coloring of $B_{2i}$ is copied from $B_{2i-1}$,
	but we set $B_{2i}^i$ to $B_{2d+3}^i$.
	Note that the number of blocks is enough to correctly replace all colors such that at the end the coloring of the block $B_{2d+3}$ is reached and no coloring conflict is created.

	After the coloring of the left neighborhood of $w_i$, we proceed similarly with the right neighborhood (with the difference that the block $B_{d+2}$ may now inherit the coloring from the first (greedy) phase.
	The total running time is clearly $pd^{O(d)} + O(n\log c)$.
	The additive error introduced by this coloring routine is at most the size of all neighborhoods of the precolored vertices $w_1,\dots,w_p$, which is $O(p d^2)$.
\end{proof}

\subsection{Exact algorithm for few colors}
We have already seen that \DPED is \Wone-hard when parameterized by both the number of colors and the distance, which rules out the existence of an FPT algorithm parameterized by these parameters alone under standard assumptions.
We have also seen that an FPT approximation algorithm with additive error is possible if there are few precolored vertices and the distance parameter is small, even for an unbounded number of colors.
Here, we build on these results and show that an exact FPT algorithm is possible when the number of colors, the distance parameter, and the number of precolored vertices are all small.
In some ways, it can also be seen as an extension of the greedy algorithm from Theorem~\ref{thm:greedy-algo} in the case of few colors and small distance.
Our approach exploits an intriguing connection between $d$-distance colorings of paths and regular languages.

A \emph{non-deterministic finite automaton (NFA)} $A$ is a tuple $(\Sigma, Q, \delta, q_0, F)$, where $\Sigma$ is a finite alphabet, $Q$ is a finite set of states, $q_0 \in Q$ is an initial state, $F \subseteq Q$ is the set of final states, and $\delta \subseteq Q \times \Sigma \times Q$ is a transition relation.
A word $w = w_1 \cdots w_n$ over the alphabet~$\Sigma$ is \emph{accepted by $A$} if there exists a sequence of states $p_0, \dots, p_n$ such that $p_0$ is the initial state~$q_0$, $p_n$ is some final state in $F$ and we have $(p_{i-1}, w_i, p_i) \in \delta$ for every $i \in [n]$.
We denote by $L(A)$ the language of all words accepted by~$A$.

For a word $w \in \Sigma^*$ and a letter $\alpha \in \Sigma$, we let $|w|_\alpha$ denote the number of occurrences of~$\alpha$ in~$w$.
We assume that $\Sigma = \{\alpha_1, \dots, \alpha_k\}$ and we define the \emph{Parikh image $\mathcal{P}(w)$} of a word $w \in \Sigma^*$ to be the tuple $(|w|_{\alpha_1}, \dots, |w|_{\alpha_k}) \in \mathbb{N}^k$.
In other words, the Parikh image of a word $w$ captures exactly the number of occurrences of each letter from~$\Sigma$ in~$w$.
Moreover, for a set of words $L \subseteq \Sigma^*$, we let $\mathcal{P}(L)$ denote the set of all Parikh images of words in $L$.

It is known that given a small NFA~$A$ over a small alphabet~$\Sigma$ together with a tuple~$\mathbf{b}$, it is possible to efficiently decide whether there exists a word in $w \in L(A)$ such that $\mathcal{P}(w) = \mathbf{b}$.

\begin{theorem}[\cite{KopczynskiT10, To10}]
	\label{thm:parikh-images}
	Given an NFA $\mathcal{A}$ with $n$ states over the alphabet $\Sigma$ of size~$k$, and a tuple $\mathbf{b} \in \mathbb{N}^k$, checking whether $\mathbf{b} \in \mathcal{P}(L(\mathcal{A}))$ can be done in time $2^{O(k^2 \log(kn) + \log\log b)}$ where $b = \|\mathbf{b}\|_{\infty}$.  
\end{theorem}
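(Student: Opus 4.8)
The plan is to reduce the question ``$\mathbf{b} \in \mathcal{P}(L(\mathcal{A}))$'' to the feasibility of a succinct system of linear Diophantine constraints in few variables, and then to invoke integer programming in fixed dimension. The starting point is the classical fact (Parikh's theorem) that $\mathcal{P}(L(\mathcal{A}))$ is a semilinear subset of $\mathbb{N}^k$; the actual work, and the reason the bound has the stated shape, is in exhibiting a \emph{small} semilinear description tailored to the structure of NFAs rather than using the generic (exponentially larger) one.

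Concretely, I would first recall the Eulerian/flow characterization of Parikh images of automata: a tuple $\mathbf{m} \in \mathbb{N}^k$ lies in $\mathcal{P}(L(\mathcal{A}))$ if and only if there is an assignment $f$ of non-negative integers to the transitions of $\mathcal{A}$ with (i) Kirchhoff flow conservation at every state, the balance being $+1$ at $q_0$ and $-1$ at some final state; (ii) the support of $f$ (the transitions carrying positive flow) inducing a subgraph in which every used state is reachable from $q_0$; and (iii) for every letter $\alpha_j$, the total flow on $\alpha_j$-labelled transitions equal to $m_j$. Conditions (i) and (iii) are linear in $f$, and the connectivity condition (ii) can be certified by extra existential variables describing a spanning arborescence of the support rooted at $q_0$; together this yields an existential Presburger sentence of size polynomial in $|\mathcal{A}|$ that is true exactly when $\mathbf{b} \in \mathcal{P}(L(\mathcal{A}))$, so membership reduces to deciding satisfiability of this sentence.

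To actually obtain the bound $2^{O(k^2 \log(kn) + \log\log b)}$ I would compress the solution space. The structural core (this is essentially the content of the Kopczy\'nski--To result) is that whenever a feasible flow exists, one can be taken whose support is a single simple path from $q_0$ to a final state together with at most $k$ additional simple cycles carrying possibly large multiplicities; this is a Carath\'eodory / Eisenbrand--Shmonin style statement about the cone of flows, exploiting that we live in dimension $k$. It follows that $\mathcal{P}(L(\mathcal{A}))$ is a union of linear sets, each with at most $k$ periods, all of whose offsets (paths) and periods (cycles) are vectors of $\ell_1$-norm at most $n$. One then enumerates the combinatorial ``shape'' of such a witness --- the chosen final state, the connectivity pattern, and which $\le k$ cycles are used --- in $2^{O(k^2\log(kn))}$ ways, and for each shape is left with choosing the $\le k$ cycle multiplicities subject to (i) and (iii), i.e.\ an integer program in $O(k)$ variables; Lenstra's algorithm solves it in time polynomial in the bit-length of the data, and a finer analysis of how $\mathbf{b}$ enters these residual programs pushes the dependence on $b$ down to the claimed $\log\log b$.

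The step I expect to be the main obstacle is precisely this structural compression: proving that $k$ cycles always suffice (so that the enumeration is $2^{O(k^2\log(kn))}$ rather than exponential in $n$) while simultaneously keeping the support connected, and controlling all magnitudes so that the residual integer programs are genuinely of fixed dimension. Once that bound is in hand, the flow characterization and the final appeal to integer programming in fixed dimension are routine.
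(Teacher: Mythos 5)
This statement is not proved in the paper at all: it is imported verbatim from the cited works of Kopczy\'nski and To, so there is no in-paper argument to compare yours against. Judged on its own terms, your sketch correctly reconstructs the architecture of the known proof: the Eulerian/flow characterization of Parikh images of NFAs, the passage to a semilinear representation in which each linear set has at most $k$ periods with polynomially bounded entries, the enumeration of combinatorial shapes in $2^{O(k^2\log(kn))}$ time, and the final appeal to integer programming in fixed dimension (which is where the $2^{O(\log\log b)} = (\log b)^{O(1)}$ dependence on $\|\mathbf{b}\|_\infty$ comes from). This is genuinely the right decomposition, and your identification of where the difficulty lies is accurate.

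That said, as a proof it has a real gap, and you name it yourself: the structural compression lemma asserting that a feasible flow can always be replaced by one supported on a single simple accepting path plus at most $k$ simple cycles, while preserving connectivity of the support and the letter counts. That lemma is not a routine Carath\'eodory argument --- the naive cone-based reduction to $k$ generators destroys connectivity of the flow support, and repairing connectivity without blowing up the number of cycles is precisely the technical contribution of Kopczy\'nski and To. Everything else in your outline (the flow LP, the shape enumeration, Lenstra) is standard once that lemma is available, but without it the enumeration could a priori require exponentially many cycles in $n$ and the claimed bound would not follow. So your proposal is an accurate roadmap to the cited proof rather than a self-contained proof; for the purposes of this paper, citing the result as the authors do is the appropriate resolution.
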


The main theorem of this section follows by reducing \DPED to membership for Parikh images of a regular language given by a small NFA.
In order to make the argument easier to follow, we use an intermediate problem that asks whether there exists a word in the regular language with a given Parikh image that, moreover, has predetermined letters in some positions.
Let $P$ be a relation $P \subseteq \mathbb{N} \times \Sigma$ where $\Sigma$ is a finite alphabet.
We say that a word $w$ is \emph{$P$-constrained} if we have $w_i = \alpha$ for every $(i, \alpha) \in P$.

\defproblem{\textsc{Constrained Membership for Parikh Languages of NFAs} (CMPL)}
{An NFA $A$ over alphabet $\Sigma$ of size~$k$, a tuple $\mathbf{b} \in \mathbb{N}^k$ and a relation $P \subseteq \mathbb{N} \times \Sigma$.}{Is there a $P$-constrained word $w \in L(A)$ such that $\mathcal{P}(w) = \mathbf{b}$?}

\begin{theorem}
	\label{thm:cmpl-fpt}
	\textsc{CMPL} can be solved in time  $2^{O((k+p)^2 \log((k + p) n) + \log\log b)}$ where $n$ is the number states of the NFA~$A$, $b = \|\mathbf{b}\|_{\infty}$ and $p = |P|$. 
\end{theorem}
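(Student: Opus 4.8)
The plan is to reduce \textsc{CMPL} to plain Parikh-image membership (Theorem~\ref{thm:parikh-images}): I would replace the automaton $A$ by a moderately larger NFA $B$ over an enlarged alphabet, chosen so that the \emph{positional} constraints imposed by $P$ turn into purely \emph{numerical} (counting) constraints that can be off-loaded onto the target Parikh vector. First some preprocessing. Let $N := \|\mathbf{b}\|_1 = \sum_i b_i$, which is the length of every word with Parikh image $\mathbf{b}$. Output ``no'' immediately if $P$ contains two pairs $(i,\alpha),(i,\alpha')$ with $\alpha \ne \alpha'$, or a pair $(i,\alpha)$ with $i \notin \{1,\dots,N\}$. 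Otherwise $P$ consists of $p$ pairs at distinct positions; sort them as $i_1 < i_2 < \dots < i_p$ with required letters $\beta_1, \dots, \beta_p$. These positions split every word of length $N$ into $p{+}1$ \emph{blocks} $0,1,\dots,p$, where block $\ell$ is the set of free positions strictly between the $\ell$-th and $(\ell{+}1)$-th constrained position (positions $0$ and $N{+}1$ serving as artificial end markers).

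\textbf{The automaton.} Enlarge the alphabet to $\Sigma' = \Sigma \cup \{\gamma_1, \dots, \gamma_p\} \cup \{\tau_0, \dots, \tau_p\}$, where the $\gamma_\ell$ are fresh \emph{constraint symbols} and the $\tau_\ell$ are fresh \emph{block markers}; thus $|\Sigma'| = k + 2p + 1 = O(k+p)$. The states of $B$ are triples $(q,\ell,s)$ with $q$ a state of $A$, $\ell \in \{0,\dots,p\}$ a block index, and $s \in \{\mathrm{ready}, \mathrm{armed}\}$; the start state is $(q_0,0,\mathrm{ready})$ and the accepting states are $(q,p,\mathrm{ready})$ with $q \in F$. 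The transitions: from $(q,\ell,\mathrm{ready})$, on $\tau_\ell$ go to $(q,\ell,\mathrm{armed})$, and, if $\ell < p$, on $\gamma_{\ell+1}$ go to $(q',\ell{+}1,\mathrm{ready})$ for each $q'$ with $(q,\beta_{\ell+1},q') \in \delta$; from $(q,\ell,\mathrm{armed})$, on a letter $\alpha \in \Sigma$ go to $(q',\ell,\mathrm{ready})$ for each $q'$ with $(q,\alpha,q') \in \delta$. So $B$ has $O(np)$ states, and every word it accepts has the shape $(\tau_0 a^0_1)\cdots(\tau_0 a^0_{g_0})\,\gamma_1\,(\tau_1 a^1_1)\cdots\gamma_2\cdots\gamma_p\,(\tau_p a^p_1)\cdots(\tau_p a^p_{g_p})$; deleting every $\tau_\ell$ and replacing every $\gamma_\ell$ by $\beta_\ell$ yields a word of $L(A)$, and conversely every word of $L(A)$ arises this way once we choose how to cut it into blocks.

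\textbf{Target and correctness.} Set $G_0 := 0$, $G_\ell := i_\ell - \ell$ for $\ell \in [p]$, $G_{p+1} := N - p$, and $g_\ell := G_{\ell+1} - G_\ell$ for $0 \le \ell \le p$ (all nonnegative by the preprocessing). Define $\mathbf{b}'$ over $\Sigma'$ by $b'_{\gamma_\ell} = 1$, $b'_{\tau_\ell} = g_\ell$, and $b'_\alpha = b_\alpha - |\{\ell : \beta_\ell = \alpha\}|$ for $\alpha \in \Sigma$; if some $b'_\alpha < 0$ (a color is demanded fewer times than $P$ forces it), output ``no''. The key claim is that $\mathbf{b}' \in \mathcal{P}(L(B))$ iff the \textsc{CMPL} instance is a yes-instance. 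For the nontrivial direction, given $w' \in L(B)$ with $\mathcal{P}(w') = \mathbf{b}'$, the decoded word $w$ lies in $L(A)$ and has Parikh image $\mathbf{b}$ (re-inserting each $\beta_\ell$ once compensates the $-1$ in $b'_\alpha$); and since $b'_{\tau_\ell} = g_\ell$ forces block $\ell$ of $w'$ to hold exactly $g_\ell$ real letters, the symbol $\gamma_\ell$ — hence $\beta_\ell$ — sits at position $(g_0 + \dots + g_{\ell-1}) + \ell = G_\ell + \ell = i_\ell$, so $w$ is $P$-constrained. The converse, encoding any \textsc{CMPL} witness into a suitable $w'$ by tagging, is routine. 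Finally, Theorem~\ref{thm:parikh-images} applied to $B$ and $\mathbf{b}'$ — alphabet size $O(k+p)$, $O(np)$ states, $\|\mathbf{b}'\|_\infty \le N \le kb$ — gives running time $2^{O((k+p)^2 \log((k+p)n) + \log\log b)}$, since $\log p$ is absorbed into $\log((k+p)n)$ and $\log\log(kb)$ into the first summand.

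\textbf{Main obstacle.} The only genuine idea is to refuse to remember positions in $B$'s state — they range up to $N = \Theta(kb)$, which would destroy the state bound — and instead to push all positional counting onto the block-marker letters $\tau_\ell$ and the Parikh target. Consequently the step that needs real care is verifying that the block structure of $L(B)$ together with the prescribed counts $b'_{\tau_\ell}$ pins each $\gamma_\ell$ to exactly position $i_\ell$, together with the bookkeeping for the degenerate cases ($p = 0$, coincident or out-of-range constrained positions) that the preprocessing disposes of.
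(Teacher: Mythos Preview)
Your proposal is correct and follows essentially the same approach as the paper: both constructions track the current block index in the automaton state, attach a per-block counter letter to each real letter so that the Parikh target pins down the block lengths (and hence the constrained positions), and let special transitions between consecutive block-copies simulate the forced letters~$\beta_\ell$. The only cosmetic differences are that the paper places its counter letter \emph{after} each real letter (via an in/out-state split) rather than before, and keeps the constrained letters inside~$\Sigma$ instead of replacing them by fresh symbols~$\gamma_\ell$; neither choice affects the argument or the final bound.
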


Due to the space constraints, we include the proof of Theorem~\ref{thm:cmpl-fpt} only in Appendix~\ref{apx:cmpl-fpt}.

\begin{theorem}
	\DPED can be solved on path instances in time $O(n) + 2^{O(d \cdot (c + q)^2 \log(c + q))}\cdot (\log n)^{O(1)}$ where $c$ is the number of colors, $d$ is the distance parameter and $q$ is the number of precolored vertices. 
\end{theorem}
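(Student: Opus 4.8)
The natural approach is to reduce \DPED on a path to \textsc{CMPL} and then invoke Theorem~\ref{thm:cmpl-fpt}. The idea is that a $d$-distance coloring of a path with vertices $v_1, \dots, v_n$ can be read off as a word $w = w_1 \cdots w_n$ over the alphabet $\Sigma$, where each letter encodes the color of the current vertex \emph{together with} the colors of the previous $d-1$ vertices (i.e.\ a ``sliding window'' of width~$d$). The $d$-distance constraint then becomes a local transition rule: from a state recording the colors of the last $d-1$ vertices, we may append a color $a$ only if $a$ does not appear in that window, and the new state is obtained by shifting the window. This is exactly the structure of an NFA.

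\textbf{Step 1: Building the NFA.} I would take $Q$ to be the set of all $d$-distance-valid windows, i.e.\ sequences of at most $d-1$ colors from $C$ that are pairwise distinct — actually, since the window must itself be conflict-free, its length is at most $\min(d-1, c)$, giving $|Q| \le c^{O(d)}$ states (more precisely, $\sum_{i=0}^{d-1} c^i = c^{O(d)}$). The alphabet $\Sigma$ is the set of colors $C$, so $k = c$. The initial state is the empty window, the final states are all of $Q$, and the transition relation $\delta$ contains $(\sigma, a, \sigma')$ whenever $a \notin \sigma$ and $\sigma'$ is the window obtained by appending $a$ to $\sigma$ and truncating to length $d-1$. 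Then a word $w$ is accepted iff it spells out a valid $d$-distance coloring of the path, and its Parikh image $\mathcal{P}(w)$ records exactly how many vertices got each color.

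\textbf{Step 2: Encoding precolorings and demands.} The precolored vertices translate directly into the constraint relation $P$: for each $v_i \in A$ with $\gamma'(v_i) = a$, put $(i, a) \in P$, so $|P| = q$. The demands need a little care because $\eta$ counts only \emph{newly} colored vertices. I would fix this by first reading off the precoloring's own color frequencies, say $f(a) = |\{v_i \in A : \gamma'(v_i) = a\}|$, and asking for a word whose Parikh image equals the tuple $\mathbf{b}$ with $\mathbf{b}_a = \eta(a) + f(a)$ for each $a \in C$; since $P$ forces the precolored positions to use exactly the precoloring, the remaining $n - q$ positions then contribute exactly $\eta(a)$ occurrences of each color $a$. (One should double-check feasibility corner cases, e.g.\ that $\eta(a) + f(a) \le n$ and that the precoloring is itself $d$-distance valid; these are trivial to verify in $O(n)$ time, or are handled automatically by the NFA rejecting.) Note $b = \|\mathbf{b}\|_\infty \le n$, so $\log\log b = O(\log\log n)$.

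\textbf{Step 3: Complexity accounting.} Plugging $n_{\mathrm{states}} = |Q| = c^{O(d)}$, $k = c$ and $p = q$ into Theorem~\ref{thm:cmpl-fpt} gives running time
\[
2^{O((c+q)^2 \log((c+q) \cdot c^{O(d)}) + \log\log n)} = 2^{O((c+q)^2 \cdot d \log(c+q) + \log\log n)},
\]
since $\log(c^{O(d)}) = O(d \log c) = O(d \log(c+q))$ dominates $\log(c+q)$. Together with the $O(n)$ preprocessing (building $P$, computing $f$, sanity checks, and the $(\log n)^{O(1)}$ needed to handle the numeric input $\mathbf{b}$), this matches the claimed bound. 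The correspondence between accepted $P$-constrained words with Parikh image $\mathbf{b}$ and valid solutions of the \DPED instance is a bijection by construction, so the algorithm is correct.

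\textbf{Main obstacle.} The conceptual content is all in the NFA encoding — getting the window semantics exactly right (length capped at $\min(d-1,c)$, empty-window handling at the start, final states being everything so we accept partial-length prefixes of length exactly $n$) and, more subtly, correctly translating ``newly colored'' demands into an absolute Parikh target in the presence of the positional constraints $P$. Once that reduction is pinned down, everything else is a mechanical substitution into Theorem~\ref{thm:cmpl-fpt}. I would expect the write-up to spend most of its effort carefully defining the automaton and verifying the two directions of the correspondence, and only a sentence on the arithmetic.
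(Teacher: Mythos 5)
Your proposal is correct and follows essentially the same route as the paper: a sliding-window NFA whose states are repetition-free sequences of recent colors (the paper uses windows of length up to $d$ rather than $d-1$, an off-by-one that depends only on whether ``$d$-distance'' forbids equal colors at distance $\le d$ or $< d$, and does not affect the asymptotics), precolored vertices encoded as the constraint relation $P$, the Parikh target shifted by the precoloring frequencies $\eta(a)+|\gamma'^{-1}(a)|$, and a final invocation of Theorem~\ref{thm:cmpl-fpt} with $c^{O(d)}$ states, alphabet size $c$, and $|P|=q$. The complexity accounting also matches the paper's.
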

\begin{proof}
	Let $(G, C, d, \gamma', \eta)$ be an instance of \DPED.
	As we have observed before, we can assume that $G$ is in fact a single path on vertices $v_1, \dots, v_n$ in this order along the path.
	
	We start by showing that proper $d$-distance colorings of paths form a regular language over~$C$.
	But first, we need some notation.
	For an alphabet~$\Sigma$ and a positive integer~$t$, we let $\Sigma^{\preceq t}$ denote the set of all words of length at most~$t$ over~$\Sigma$ with no repeated letters, including the empty word~$\varepsilon$.
	Moreover for word $w \in \Sigma^{\preceq t}$ and letter $\alpha \notin w$, we let $\psi_t(w, \alpha)$ be an operation which appends the letter $\alpha$ at the end of~$w$ and then possibly deletes the first letter if $w$ already has length~$t$.
	
	\begin{figure}
		\begin{center}
			\begin{tikzpicture}[->, >=stealth, auto, node distance=2.8cm, semithick]
				
				\tikzstyle{every state}=[fill=white,draw=black,text=black,minimum size=0.9cm]
				\tikzset{
					state1/.style={fill=red!20},
					state2/.style={fill=blue!20},
					state3/.style={fill=green!20},
					state4/.style={fill=orange!20},
					semifill/.style 2 args={path picture={
							\fill[#1] 
							(path picture bounding box.south) rectangle
							(path picture bounding box.north west);
							\fill[#2] 
							(path picture bounding box.south) rectangle
							(path picture bounding box.north east);
					}}
				}
				
				\node[state,initial by diamond] (E) at (0,0) {$\epsilon$};
				
				\def\rA{1.5}
				\def\rB{3}
				
				\node[state, state1] (1) at (90:\rA) {$1$};
				\node[state, state2] (2) at (0:\rA) {$2$};
				\node[state, state3] (3) at (270:\rA) {$3$};
				\node[state, state4] (4) at (180:\rA) {$4$};
				
				\node[state, semifill={red!20}{blue!20}] (12) at (60:\rB) {$1\,2$};
				\node[state, semifill={red!20}{green!20}] (13) at (90:\rB) {$1\,3$};
				\node[state, semifill={red!20}{orange!20}] (14) at (120:\rB) {$1\,4$};
				
				\node[state, semifill={blue!20}{red!20}] (21) at (30:\rB) {$2\,1$};
				\node[state, semifill={blue!20}{green!20}] (23) at (0:\rB) {$2\,3$};
				\node[state, semifill={blue!20}{orange!20}] (24) at (330:\rB) {$2\,4$};
				
				\node[state, semifill={green!20}{red!20}] (31) at (240:\rB) {$3\,1$};
				\node[state, semifill={green!20}{blue!20}] (32) at (270:\rB) {$3\,2$};
				\node[state, semifill={green!20}{orange!20}] (34) at (300:\rB) {$3\,4$};
				
				\node[state, semifill={orange!20}{red!20}] (41) at (150:\rB) {$4\,1$};
				\node[state, semifill={orange!20}{blue!20}] (42) at (180:\rB) {$4\,2$};
				\node[state, semifill={orange!20}{green!20}] (43) at (210:\rB) {$4\,3$};
				
				\path (E) edge node {1} (1)
				edge node {2} (2)
				edge node {3} (3)
				edge node {4} (4);
				
				\path (1) edge node {2} (12)
				edge node {3} (13)
				edge node {4} (14);
				
				\path (2) edge node {1} (21)
				edge node {3} (23)
				edge node {4} (24);
				
				\path (3) edge node {1} (31)
				edge node {2} (32)
				edge node {4} (34);
				
				\path (4) edge node {1} (41)
				edge node {2} (42)
				edge node {3} (43);
				
				\path (12) edge[bend right, out=60, in=110, looseness=1.4] node[left] {3} (23);
				\path (12) edge[bend right, out=80, in=110, looseness=1.4] node {4} (24);
				
				\path (13) edge[bend right, out=255, in=285, looseness=2.3] node[left] {2} (32);
				\path (13) edge[bend right, out=90, in=90, looseness=2] node {4} (34);
				
				\path (14) edge[bend right, out=300, in=250, looseness=1.2] node[xshift=-2] {2} (42);
				\path (14) edge[bend right, out=280, in=250, looseness=1.2] node[left] {3} (43);
				%
				%
				%
				%
				%
				%
				%
				%
				%
				%
				\pgfresetboundingbox
				\path [use as bounding box] ($(E)-(5,3.5)$) rectangle ($(E)+(5,3.5)$);
			\end{tikzpicture}
		\end{center}
		\caption{The NFA $A_{[4], 2}$ that accepts exactly the 2-distance colorings of paths using colors~$\{1, \dots, 4\}$. The initial state~$\epsilon$ is in the center, all its states are final and for improved readability, we omit all transitions from states with two colors except for $12$, $13$ and $14$.}
		\label{fig:coloring-nfa}
	\end{figure}

	Now we can define the NFA $A_{\Sigma,t} = (\Sigma, \Sigma^{\preceq t}, \delta, \varepsilon, \Sigma^{\preceq t})$ where $(w, \alpha, w') \in \delta$ if and only if $\psi_t(w, \alpha) = w'$.
	The language $L(A_{\Sigma,t})$ clearly contains exactly those words over~$\Sigma$ that do not contain two occurrences of the same letter at a distance less than~$t+1$ from each other.
	See Figure~\ref{fig:coloring-nfa}.
	Therefore, the NFA $A_{C, d}$ accepts exactly valid $d$-distance colorings of paths, and we can solve \DPED with no precolored vertices by invoking Theorem~\ref{thm:parikh-images} for the NFA $A_{C,d}$ where the tuple $\mathbf{b}$ encodes the demand function~$\eta$.
	
	Moreover, we can encode the precolored vertices as additional constraints.
	In particular, we define a set of constraints $P$ where $(i, \alpha) \in P$ if and only if the vertex~$v_i$ is precolored and $\gamma'(v_i) = \alpha$.
	Additionally, we set $\mathbf{b} \in \mathbb{N}^c$ where $b_j = \eta(j) + |\gamma'^{-1}(j)|$ for each $j \in [c]$.
	Observe that the mapping $w \mapsto \gamma_w$ where $\gamma_w(v_i) = w_i$ is a bijection between $P$-constrained words with $\mathcal{P}(w) = \mathbf{b}$ and proper distance $d$-colorings of~$G$ that extend~$\gamma'$ and satisfy the demands~$\eta$.
	In other words, $(G, C, d, \gamma', \eta)$ is yes-instance of \DPED if and only if $(A_{C,d}, \mathbf{b}, P)$ is a yes-instance of \textsc{CMPL}.

	To finish the proof, it remains to invoke the algorithm of Theorem~\ref{thm:cmpl-fpt} on the produced NFA~$A$, the tuple~$\mathbf{b}$, and the set of constraints~$P$.
	The desired runtime follows since the automaton~$A_{C,d}$ has $O(c^d)$ states and the alphabet is exactly the set of colors~$C$ of size~$c$.
\end{proof}

\section{Coloring without demands}
\label{sec:nodemands}
In this section, we focus on variants of the two problems (\textsc{Precoloring Extension} and \textsc{List Coloring}) on paths where we no longer specify demands for each color, but we still require vertices at distance at most~$d$ to receive different colors.
We refer to these problems as \textsc{Distance Precoloring Extension} (\DPE) and \textsc{Distance List Coloring} (\DLC).
Due to the space constraints, we postpone the proofs to Appendix~\ref{apx:nodemands}.

We show that \DPE is \NP-complete on paths by reduction from \textsc{Precoloring Extension} on unit interval graphs~\cite{Marx06}.
We remark that unit interval graphs are a suitable source of hardness since it is known that unit interval graph are precisely the induced subgraphs of path powers~\cite{LinRSS11}.
Consequently, \DPED remains \NP-complete on paths as well.

\begin{theorem}
\DPE is \NP-complete even when restricted to paths.
\label{thm:DPE}
\end{theorem}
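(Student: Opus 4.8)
The plan is to give a polynomial-time reduction from \textsc{Precoloring Extension} on unit interval graphs, which is known to be \NP-complete~\cite{Marx06}, and to exploit the structural fact that unit interval graphs are exactly the induced subgraphs of powers of paths~\cite{LinRSS11}. So let $(H, C, \gamma'_H)$ be an instance of \textsc{Precoloring Extension} where $H$ is a unit interval graph. First I would fix a unit interval representation of $H$ and use it to embed $H$ as an induced subgraph of $P^d$ for a suitable path $P = v_1, \dots, v_N$ and a suitable distance parameter $d$; concretely, one sorts the intervals by left endpoint, maps the $i$-th interval to a vertex $u_i$ of $P$, and chooses the spacing between consecutive $u_i$'s so that two vertices of $H$ are adjacent precisely when the corresponding path vertices are at distance at most $d$. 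Since the intervals all have the same length, a uniform choice of $d$ works, and $N$ is polynomial in $|V(H)|$.

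The vertices of $P$ that do not correspond to any interval of $H$ are ``filler'' vertices, and the point is to precolor them so that a $d$-distance coloring of $P$ restricted to the image of $H$ is forced to behave exactly like a precoloring extension of $H$. The natural way to do this is to introduce a pool of fresh colors $C^\star$ disjoint from $C$, large enough (size roughly $d$ or a small multiple of $d$) that one can always color all the filler vertices with colors from $C^\star$ while respecting the distance constraint, and crucially so that these fresh colors never interfere with the genuine vertices: a filler vertex at distance at most $d$ from a genuine vertex must simply not block any color of $C$ that the genuine vertex might want. One clean way to arrange this is to precolor every filler vertex with a color from $C^\star$, cycling through $C^\star$ along the path in the style of the auxiliary coloring $\gamma^\star$ used in the proof of Theorem~\ref{thm:DPED-Wh}, so that no two filler vertices within distance $d$ share a color and no filler vertex uses a color from $C$ at all. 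Then I would set $\gamma'_P(u_i) = \gamma'_H(x)$ whenever the interval $x$ of $H$ mapped to $u_i$ was precolored, and leave the remaining genuine vertices uncolored.

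The correctness argument then has two directions. If $\gamma_H$ extends $\gamma'_H$ and is a proper coloring of $H$, then coloring $u_i$ by $\gamma_H$ of its interval and keeping the forced $C^\star$-coloring on the fillers yields a $d$-distance coloring of $P$: two genuine vertices at distance at most $d$ are adjacent in $H$ hence differently colored, a genuine and a filler vertex get colors from disjoint palettes, and two filler vertices within distance $d$ differ by the cyclic construction. Conversely, a $d$-distance coloring $\gamma_P$ extending $\gamma'_P$ restricts to a coloring of $H$; it respects $\gamma'_H$ by construction, and if two intervals are adjacent in $H$ their path images are at distance at most $d$, so they receive distinct colors --- note that $\gamma_P$ might use colors of $C^\star$ on genuine vertices, so to land in \textsc{Precoloring Extension} with color set $C$ one either allows the output color set to be $C \cup C^\star$ (harmless, since \textsc{Precoloring Extension} does not bound the number of colors) or observes that any use of a $C^\star$-color on a genuine vertex can be swapped out. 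Membership in \NP is immediate since a coloring is a polynomial-size certificate checkable in polynomial time. For the final sentence of the paragraph preceding the theorem, \DPED inherits \NP-completeness by the obvious reduction from \DPE that sets each demand $\eta(c)$ to be the number of vertices it is forced to take, or more simply by padding with a disjoint gadget realizing arbitrary demands.

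The main obstacle I anticipate is the geometric bookkeeping in the embedding step: one must verify that a single uniform distance parameter $d$ and a single spacing scheme simultaneously realize all adjacencies and all non-adjacencies of $H$ on the path. This is where the ``unit'' in unit interval graph is essential --- for general interval graphs the analogous embedding into a path power fails --- and one has to be careful that rounding interval endpoints to integer path positions does not create or destroy edges, which is handled by scaling the representation up by a large integer factor first. Everything else (the $C^\star$ palette, the cyclic filler precoloring, the two-way correctness check) is routine and closely parallels the auxiliary-color machinery already developed for Theorem~\ref{thm:DPED-Wh}.
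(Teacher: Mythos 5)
Your reduction is the same one the paper uses---\textsc{Precoloring Extension} on unit interval graphs, embedded into a path power via the unit interval representation, with filler vertices precolored cyclically by fresh colors---but there is a genuine gap in the backward direction. A $d$-distance coloring of the path may assign an auxiliary color from $C^\star$ to a representative vertex, and neither of your two patches closes this. Enlarging the target color set to $C \cup C^\star$ is not harmless: the color set is part of the input to \textsc{Precoloring Extension}, and a proper $C$-coloring of $H$ is exactly what you must extract; with extra colors available, the extracted coloring certifies nothing. And ``swapping out'' an auxiliary color is not always possible: if $H$ contains a clique on $|C|+1$ vertices (so the \textsc{Precoloring Extension} instance is a no-instance), the corresponding representative vertices are pairwise within distance $d$ on the path, and one of them could legitimately escape into $C^\star$ unless something forbids it---making the path instance a yes-instance and breaking the reduction.

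The paper closes this hole by construction rather than by post-processing: it places the main vertices at positions $w_{3i+d}$ with $d = 3n$, so that every main vertex has at least $d+1$ non-main vertices within distance $d$, and it uses exactly $d+1$ auxiliary colors cycled through the non-main vertices. Consequently every auxiliary color already occurs within distance $d$ of every representative vertex, so no representative vertex can receive a color from $C^\star$, and the extracted coloring of $H$ automatically uses only $C$. You need to add this density guarantee (or an equivalent blocking mechanism) to your spacing scheme; the rest of your argument, including the forward direction, the \NP membership, and the corollary for \DPED, matches the paper's.
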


\begin{corollary}
\label{cor:DPED-path}
\DPED is \NP-complete even when restricted to paths.
\end{corollary}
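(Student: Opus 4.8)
Containment in \NP\ is immediate: a $d$-distance coloring $\gamma$ is a polynomial-size certificate, and one checks in polynomial time that it extends $\gamma'$, keeps monochromatic vertices at distance at least $d$, and realizes every demand. For hardness I would reduce from \DPE\ on paths, which is \NP-hard by Theorem~\ref{thm:DPE}. The only gap is that \DPE\ does not prescribe how many vertices of each color are used, whereas \DPED\ does; I would bridge it by adding a \emph{reservoir} of uncolored vertices whose sole role is to absorb whatever residual demand a coloring of the original path leaves behind.

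\textbf{The reduction.} Given a \DPE\ instance $(G,C,d,\gamma')$ on a path, let $c=|C|$ and let $m$ be the number of uncolored vertices of $G$. I would produce the \DPED\ instance $(G',C,d,\gamma',\eta)$ where $\eta(c_i)=m$ for every color, and $G'$ is $G$ together with $m(c-1)$ fresh, pairwise non-adjacent vertices (the reservoir), with $\gamma'$ unchanged. If $(G',C,d,\gamma',\eta)$ is a yes-instance, restricting a witnessing coloring to $G$ gives a $d$-distance coloring extending $\gamma'$, so $(G,C,d,\gamma')$ is a yes-instance. Conversely, a valid $d$-distance coloring $\rho$ of $G$ extending $\gamma'$ uses each color $c_i$ on some $a_i$ uncolored vertices of $G$ with $\sum_i a_i=m$; since $\sum_i (m-a_i)=m(c-1)$ equals the number of reservoir vertices and the reservoir is constraint-free, we may color it so that $c_i$ occurs exactly $m-a_i$ more times, which together with $\rho$ witnesses a yes-instance of \DPED. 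This already establishes \NP-completeness of \DPED\ on disjoint unions of paths, which is the ``path-instance'' setting used elsewhere in the paper.

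\textbf{Single path.} If one insists on a single path (as in Theorem~\ref{thm:DPE}), I would make the reservoir a genuine subpath attached at one end of $G$, after a few preprocessing steps. First, one may assume $c\ge d+1$, since \DPE\ on paths is polynomial-time solvable when $c\le d$: for $c<d$ it is a no-instance as soon as $d$ consecutive vertices are present, while for $c=d$ a valid coloring is forced to be $d$-periodic, so it is determined by its first $d$ values and the precoloring leaves essentially no freedom. Second, by appending to the right end of $G$ a short block of $d$ uncolored vertices followed by one precolored vertex, one may assume the right endpoint of $G$ is precolored, because with $c\ge d+1$ that block can always be filled in greedily. One then appends a reservoir path of length $m(c-1)+O(d)$ directly to the precolored endpoint, together with a few extra precolored vertices to balance the vertex count. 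The main point to verify — and the only nontrivial one — is that for every residual multiplicity vector this reservoir path admits a $d$-distance coloring realizing it, despite the handful of its leftmost vertices being constrained by the fixed endpoint; this follows from the standard fact that a path on $L$ vertices has a $d$-distance coloring with prescribed color multiplicities whenever the largest multiplicity is at most $\lceil L/d\rceil$, the slack coming from $c\ge d+1$. The counting and \NP\ membership are routine; the flexibility of the reservoir is where the (minor) work lies.
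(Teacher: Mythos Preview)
Your reduction to disjoint unions of paths is correct and essentially identical to the paper's: add a reservoir of isolated vertices and set all demands equal. The paper sets $\eta(a)=n$ and adds $(c-1)n$ isolated vertices, you set $\eta(a)=m$ and add $(c-1)m$; this is the same idea.

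Where you diverge is in passing to a single path. The paper does this with one extra ingredient: it introduces $2d+1$ \emph{new} auxiliary colors with zero demand, concatenates all paths with $d$ precolored separator vertices between consecutive pairs, and colors the separators greedily with the new colors. This decouples the pieces completely and is immediately correct.

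Your route avoids new colors and instead tries to realize the reservoir as a subpath attached to one end of $G$. The preprocessing (reducing to $c\ge d+1$, making an endpoint precolored) is fine, but the key step has a genuine gap: the ``standard fact'' you invoke is false as stated. Take $d=3$, $L=10$, $c=4$ and multiplicities $(4,4,1,1)$: the maximum is $4=\lceil 10/3\rceil$, yet no $3$-distance coloring exists, because once one color occupies $\{1,4,7,10\}$ the remaining six positions $\{2,3,5,6,8,9\}$ admit at most three pairwise-distance-$\ge 3$ slots for any other color. The correct feasibility condition for $d$-distance colorings of a path with prescribed multiplicities is more delicate than a single max bound, and in your setup the residual vector $(m-a_1,\dots,m-a_c)$ can have several near-maximal entries (whenever several colors are unused on the uncolored part of $G$). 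On top of this, the leftmost $d$ reservoir vertices are constrained by the last $d$ vertices of $G$, whose colors depend on the unknown coloring $\rho$, so you would need the reservoir to be a yes-instance of end-precolored \DPED\ for \emph{every} such boundary---this is plausible after padding and could perhaps be argued via Theorem~\ref{thm:greedy-algo}, but it is not what you wrote. The paper's auxiliary-color trick sidesteps all of this in two lines.
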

\begin{proof}
	Let $(G, C, d, \gamma')$ be an instance of \DPE where $G$ is a path with $n$ vertices.
	We construct an instance $(G', C, d, \gamma', \eta)$ where $G'$ is simply obtained by adding $(|C| - 1) \cdot n$ isolated vertices (paths of zero length) to~$G$ and setting $\eta(a) = n$ for every $a \in C$.
	Clearly, every $d$-distance coloring of $G'$ that extends~$\gamma'$ induced the desired $d$-distance coloring of~$G$, regardless of the demands.
	On the other hand,  we can extend any $d$-distance coloring $\gamma$ of~$G$ to~$G'$ by coloring exactly $n-|\gamma^{-1}(a)|$ of the isolated vertices with each color $a \in C$ to satisfy all demands.
	
	Finally, we construct an equivalent instance of \DPED on a single path.
	It suffices to concatenate all the disjoint paths in $G'$ to a single path while adding $d$ auxiliary vertices between each consecutive pair of original paths.
	These auxiliary vertices are then greedily precolored with $2d+1$ new auxiliary colors disjoint from~$C$ that have zero demands.
	In this way, the $d$-distance colorings of individual paths from~$G'$ are independent from each other, and the equivalence follows.
\end{proof}

In contrast, we show that the problem \textsc{Distance Precoloring Extension} without demands becomes FPT by the distance parameter~$d$ even for an unbounded number of colors.
Recall that \textsc{Distance Precoloring Extension with Demands} is \Wone-hard with respect to~$d$ by Theorem~\ref{thm:DPED-Wh}.
In fact, we show the existence of an FPT algorithm for the more general problem of \textsc{Distance List Coloring} where the task is to find a list coloring that assigns different colors to every pair of vertices at distance at most~$d$.

\begin{theorem}
\label{thm:DLC}
\textsc{Distance List Coloring} can be solved on paths in time $d^{O(d)} \cdot n$ where $n$ is the number of vertices.
\end{theorem}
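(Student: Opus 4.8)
The plan is to give a left-to-right dynamic programming algorithm over the path, preceded by a preprocessing step that shrinks every list to size at most $2d+1$.

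First I would prove the preprocessing lemma: \emph{we may assume $|L(v_i)| \le 2d+1$ for every vertex $v_i$.} The idea is that each $v_i$ has at most $2d$ other vertices within distance~$d$ (namely $v_{i-d},\dots,v_{i-1},v_{i+1},\dots,v_{i+d}$), so as long as $L(v_i)$ has at least $2d+1$ colors it can always be colored regardless of how those neighbors are colored. Concretely, whenever $|L(v_i)|>2d+1$ we replace $L(v_i)$ by an arbitrary $(2d+1)$-element subset $L'\subseteq L(v_i)$. Any valid coloring of the new instance is valid for the old one since $L'\subseteq L(v_i)$; conversely, given a valid coloring $\gamma$ of the old instance, if $\gamma(v_i)\notin L'$ we recolor only $v_i$ with some color of $L'$ not used by any of its at most $2d$ neighbors within distance~$d$ — such a color exists since $|L'|=2d+1$ — and the result is still a valid $d$-distance list coloring because the only constraints that could be affected are those incident to $v_i$. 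Iterating over the at most $n$ vertices with oversized lists yields an equivalent instance with all lists of size at most $2d+1$ in time $O(dn)$.

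Second, I would run the DP on the reduced instance. Processing $v_1,\dots,v_n$ from left to right, the state after handling $v_i$ records the tuple $(\gamma(v_{\max(1,i-d+1)}),\dots,\gamma(v_i))$ of colors of the last (up to) $d$ vertices. Since the color of $v_j$ lies in $L(v_j)$, which now has size at most $2d+1$, each layer has at most $(2d+1)^d = d^{O(d)}$ reachable states. The transition from layer $i$ to layer $i+1$ picks $\gamma(v_{i+1})\in L(v_{i+1})$ avoiding the at most $d$ colors stored in the state — this is exactly the $d$-distance constraint involving $v_{i+1}$ — and slides the window; there are at most $2d+1$ choices per state, each processed in $O(d)$ time, so each layer costs $d^{O(d)}$ and the whole DP costs $d^{O(d)}\cdot n$. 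The instance is a yes-instance iff some state is reachable after $v_n$, and an explicit coloring is recovered by standard traceback; a disjoint union of paths is handled component by component. Overall this gives the claimed $d^{O(d)}\cdot n$ running time.

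The only real content is the preprocessing lemma: without it the DP state would have to range over all $d$-tuples of colors, of which there could be $|C|^{\Theta(d)}$ many, so the running time would depend on the number of colors rather than on $d$ alone. The lemma is short, but it is precisely what removes the dependence on $|C|$; everything after it is a routine coloring DP. The one point requiring a little care is the recoloring step in the lemma, where one must note that altering only $\gamma(v_i)$ cannot violate any constraint not incident to $v_i$, and that a free color in $L'$ always exists by the crude count $|L'|=2d+1 > 2d \ge$ (number of vertices within distance $d$ of $v_i$).
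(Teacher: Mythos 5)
Your proposal is correct and follows essentially the same route as the paper: the key observation that lists can be pruned to $2d+1$ colors (since at most $2d$ vertices lie within distance $d$ of any vertex), followed by a sliding-window dynamic program over the colors of the last $d$ vertices, giving $d^{O(d)}\cdot n$ time. Your explicit one-vertex-at-a-time recoloring argument for the pruning step is just a slightly more detailed write-up of the same idea.
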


\section{Acknowledgements}
We would like to thank
Václav Blažej, Dušan Knop, Josef Malík and Ondřej Suchý,
with whom we held preliminary discussions and obtained several observations relevant to this research.

\bibliographystyle{plainurl}
\bibliography{ref}

\appendix
\section{Proofs omitted from Section~\ref{sec:wonehardness}}
\label{apx:LCD-W1h}

\begin{proof}[Proof of Proposition~\ref{pro:dp}]
	Assume that $G$ is a path with vertices~$v_1, \dots, v_n$ in this order along the path.
	The algorithm fills a table $\operatorname{DP}[i, \gamma_i, \rho]$ where $i \in [n]$, $\gamma_i \colon \{v_{\max(i-d+1,1)}, \dots, v_i\} \to C$ is an injective mapping, and $\rho: C \to [n]$ is a function.
	The semantics of the table is that $\operatorname{DP}[i, \gamma_i, \rho] = \true$ if and only if there exists a $d$-distance coloring $\gamma''$ of $\{v_1, \dots, v_i\}$ that extends both $\gamma'$ and $\gamma_i$ such that $|\gamma''^{-1}(a)| = \rho(a)$.
	Assuming these semantics, there is a solution to the input instance if and only if there exists some $\gamma_n$ such that $\operatorname{DP}[n, \gamma_n, \rho^\ast] = \true$ where $\rho^\ast(a) = \eta(a) + |\gamma'^{-1}(a)|$.
	
	Initially, we set $\operatorname{DP}[0, \emptyset, \rho] = \true$ if and only if $\rho(a) = 0$ for every color $a \in C$.
	Now, fix $i \in [n]$, $\gamma_i$ and $\rho$.
	If $v_i$ is precolored by~$\gamma'$ and $\gamma_i(v_i) \neq \gamma'(v_i)$, we immediately set $\operatorname{DP}[i, \gamma_i, \rho] = \false$ since $\gamma_i$ does not extend~$\gamma'$.
	Otherwise, let $\rho'$ be a function such that $\rho'(\gamma_i(v_i)) = \rho(\gamma_i(v_i)) - 1$ and $\rho'(a) = \rho(a)$ for every other color $a \neq \gamma_i(v_i)$.
	We set $\operatorname{DP}[i, \gamma_i, \rho] = \true$ if and only if there exists $\gamma_{i-1}$ such that
	\begin{enumerate}[(a)]
		\item $\operatorname{DP}[i-1, \gamma_{i-1}, \rho'] = \true$,
		\item the color $\gamma_{i}(v_i)$ does not appear in~$\gamma_{i-1}$, and
		\item $\gamma_i$ agrees with~$\gamma_{i-1}$ on the shared vertices.
	\end{enumerate}
	The correctness of this dynamic programming scheme follows by a standard induction.
	First, we show that if $\operatorname{DP}[i,\gamma_i,\rho] = \true$ then there exists a $d$-distance coloring $\gamma''$ of $\{v_1, \dots, v_i\}$ that extends both $\gamma'$ and $\gamma_i$ such that $|\gamma''^{-1}(a)| = \rho(a)$.
	This clearly holds when $i = 0$.
	If $\operatorname{DP}[i,\gamma_i,\rho] = \true$, then take $\gamma_i$ such that (a)--(c) are satisfied.
	By induction, there exists a suitable $d$-distance coloring $\gamma''$ of $\{v_1, \dots, v_{i-1}\}$ that can be extended to a $d$-distance coloring $\gamma^\ast$ of $\{v_1, \dots, v_{i}\}$ by setting $\gamma^\ast(v_i)=\gamma_i(v_i)$.
	It is straightforward to verify that $\gamma^\ast$ has all the desired properties.
	
	On the other hand, assume that there is a $d$-distance coloring $\gamma''$ of $\{v_1, \dots, v_i\}$ that extends both $\gamma'$ and $\gamma_i$ such that $|\gamma''^{-1}(a)| = \rho(a)$.
	It suffices to notice that $\gamma''$ restricted to $\{v_1, \dots, v_{i-1}\}$ is also $d$-distance coloring extending $\gamma'$.
	Thus by induction, we have $\operatorname{DP}[i-1, \gamma_{i-1}, \rho'] = \true$ when we take $\gamma_{i-1}$ as the restriction of~$\gamma''$.
	It follows that $\operatorname{DP}[i,\gamma_i,\rho] = \true$ since $\gamma_{i-1}$ necessarily satisfies all (a)--(c).
	
	It remains to discuss the running time.
	The table $\operatorname{DP}[\cdot,\cdot,\cdot]$ has $O(c^d \cdot n^{c+1})$ entries in total and the computation of a single entry $\operatorname{DP}[i,\gamma_i,\rho]$ takes $O(c^d)$ time by trying all possible choices of $\gamma_{i-1}$.
	We can additionally observe that $\gamma_{i-1}$ contains only one extra vertex not colored by~$\gamma_i$ and thus, the computation of $\operatorname{DP}[i,\gamma_i,\rho]$ can be done in $O(c)$ time by iterating only over the choice of color for this particular vertex.
\end{proof}

\subsection{Proof of Theorem~\ref{thm:LCD-W1h}}

We will be reducing from a multidimensional version of the subset sum problem.

\defproblem{\textsc{Multidimensional Subset Sum} (\MSS)}
{An integer~$k$, a set $S=\{s_1, \dots, s_n\}$ of item vectors with $s_i \in \mathbb{N}^k$ for every $i \in [n]$ and a target vector $t \in \mathbb{N}^k$.}
{Is there a subset $S' \subseteq S$ such that $\sum_{s \in S'} s = t$?}

The problem \MSS is known to be \Wone-hard when parameterized by $k$, even if the vectors in the input are given in unary~\cite{GanianOR21}.

\begin{proof}[Proof of Theorem~\ref{thm:LCD-W1h}]
	Let $(k, S, t)$ be an instance of \MSS.
	We construct an instance $(G, C,\allowbreak d, \gamma, \eta)$ of \LCD where the underlying graph is a disjoint union of $n$ paths, one per each vector in~$S$.
	The set of colors~$C$ is composed of a single \emph{universal color}~$a$, two \emph{fill colors} $b, b'$, $k$ \emph{target colors} $c_1, \dots, c_k$ and one \emph{auxiliary color} $\star$.
	Observe that we have $|C| = k + 4$ in total.
	
	Now we describe the \emph{item gadget}~$G_r$ for a fixed vector $r \in S$.
	Denote by $r_1, \dots, r_k$ the coordinates of $r$, i.e., we have $r = (r_1, \dots, r_k)$ and moreover, let $N_r$ denote the sum $\sum_{i=1}^k r_i$ of the coordinates.
	We also let $j_1, \dots, j_{N_r}$ be a sequence of values from $[k]$ that contains each $j \in [k]$ exactly $r_j$ times.
	We can easily construct such a sequence by concatenating $k$ blocks, each consisting of $r_j$ copies of the value~$j$ for $j$ going from~$1$ to~$k$.
	
	The graph~$G_r$ is a path on~$6 N_r$ vertices $v^r_1, \dots, v^r_{6 N_r}$ enumerated in the order along the path.
	For every $k \in [N_r]$, we set
	\begin{align*}
		L(v^r_{6k-5}) &= \{a, c_{j_k}\}\\
		L(v^r_{6k-4}) &= L(v^r_{6k-1}) = \{a, b\}\\
		L(v^r_{6k-3}) &= L(v^r_{6k}) = \{a, b'\}\\
		L(v^r_{6k-2}) &= \{a, \star\}
	\end{align*}
	Notice that the target colors $c_1, \dots, c_k$ appear only in the lists of odd-numbered vertices, while the auxiliary color appears only in the lists of even-numbered vertices.
	Moreover, for any fixed $j \in [k]$, the target color~$c_j$ appears in the lists exactly as many times as in the sequence $j_1, \dots, j_{N_r}$, which is equal to $r_j$ by definition.
	
	\begin{claim}
		\label{claim:vector-gadget}
		Let~$\gamma$ be a proper list-coloring of the gadget $G_r$ that uses the universal color $a$ exactly $3 N_r$ times.
		Then  $\gamma$ uses the colors $b$ and $b'$ both exactly $N_r$ times and moreover, we have either
		\begin{itemize}
			\item $\gamma^{-1}(a) = \{v^r_{2k} \mid k\in[N_r]\}$, the target color $c_j$ is used by~$\gamma$ exactly $r_j$ times and the color~$\star$ is not used by~$\gamma$, or
			\item $\gamma^{-1}(a) = \{v^i_{2k-1} \mid k\in[N_r]\}$, $\gamma$ uses none of the target colors and it uses the color~$\star$ exactly~$N_r$ times.
		\end{itemize}
	\end{claim}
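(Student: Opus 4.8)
The plan is to pin down $\gamma$ one \emph{block} at a time, where a block is a window of six consecutive vertices $v^r_{6k-5},\dots,v^r_{6k}$. The starting observation is that $\gamma^{-1}(a)$ is an independent set of the path~$G_r$, since adjacent vertices receive distinct colours. As $G_r$ has $6N_r$ vertices and $a$ is used exactly $3N_r$ times, $\gamma^{-1}(a)$ is in fact a \emph{maximum} independent set of the path. Maximum independent sets of a path on an even number of vertices are ``almost alternating'', and restricted to a single block the set of $a$-coloured positions is one of only four local patterns. Moreover, the colours $b,b'$ occur in the block lists only at positions $6k-4,6k-1$ and $6k-3,6k$ respectively, the colour~$\star$ only at $6k-2$, and the colour $c_{j_k}$ only at $6k-5$; hence, once the local pattern of a block is fixed, the colour of each of its non-$a$ vertices is determined, and one can read off exactly which colours that block contributes.

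Next I would run a propagation argument along the path, using the single edge joining block~$k$ to block~$k+1$. If the last vertex $v^r_{6k}$ of block~$k$ is coloured~$a$, then $v^r_{6k+1}$ cannot be~$a$, which forces block $k+1$ into the pattern where all even-numbered vertices of the block are coloured~$a$; this pattern again ends with an $a$-coloured vertex, so the argument repeats and all later blocks follow the same pattern. Symmetrically, if the last vertex of block~$k$ is \emph{not} coloured~$a$, then block~$k$ must use the pattern where all odd-numbered vertices of the block are coloured~$a$, whose first vertex is~$a$, forcing the same pattern on block $k-1$ and hence on all earlier blocks. Consequently $\gamma$ has at most one transition along the path — an ``odd'' prefix of blocks, an ``even'' suffix, and at most one block in between using one of the two remaining ``defect'' patterns. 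Once it is shown that this transition sits at the very beginning or the very end and that no defect block occurs, we are in exactly one of the two claimed cases, and the colour counts follow by inspection: in the all-odd case each block contributes one~$b$ (at $v^r_{6k-4}$), one~$b'$ (at $v^r_{6k}$) and one~$\star$ (at $v^r_{6k-2}$) and no target colour, while in the all-even case each block contributes one~$b$ (at $v^r_{6k-1}$), one~$b'$ (at $v^r_{6k-3}$) and one $c_{j_k}$ (at $v^r_{6k-5}$) and no~$\star$; summing over the $N_r$ blocks gives exactly $N_r$ occurrences of each of $b$ and $b'$, and since the sequence $j_1,\dots,j_{N_r}$ contains the value~$j$ exactly $r_j$ times, $c_j$ is used exactly $r_j$ times in the second case.

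I expect the main obstacle to be precisely this last point: excluding a defect block and an interior transition, both of which are consistent with $\gamma$ merely being a proper list-colouring that uses~$a$ exactly $3N_r$ times. The lever here is counting the fill colour~$b$: a defect pattern is forced to colour \emph{two} of its $\{a,b\}$-vertices with~$b$ instead of one, so a gadget containing a defect block uses~$b$ strictly more than $N_r$ times; likewise the placements of~$\star$ and of the $c_{j_k}$'s tie the transition point to how many target colours the gadget spends, so a genuine interior transition produces a ``partial'' selection that is incompatible with the overall colour budgets that $\gamma$ meets. Combining these bounds rules out the defect block and pushes the transition to the extreme, closing the case distinction. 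Getting this bookkeeping to fit together cleanly — rather than the routine verification of the colour counts in the two surviving cases — is where I expect the real effort to lie.
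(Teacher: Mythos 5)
Your starting point is sound and, in fact, more careful than the paper's own argument: $\gamma^{-1}(a)$ is a maximum independent set of a path on the even number $6N_r$ of vertices, and such sets are \emph{not} only the all-odd and all-even ones — there is exactly one unit of slack, which may be spent on a single internal gap of length three, giving precisely the transition/defect configurations you describe. The paper disposes of this with an unjustified one-line ``observe that any list-coloring with exactly $3N_r$ occurrences of $a$ must use $a$ either on all odd-numbered or on all even-numbered vertices,'' so your instinct that this is where the real content lies is correct. (Your propagation step is stated too strongly — a block whose first vertex is not coloured $a$ is not thereby forced into the all-even pattern, since the defect could sit inside that very block — but the slack count already gives ``at most one defect,'' so that part is cosmetic.)

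The genuine gap is in your proposed way of excluding the defect: you appeal to ``the overall colour budgets that $\gamma$ meets,'' but the claim has no such hypothesis. Its only assumption is that $a$ is used exactly $3N_r$ times on the single gadget $G_r$; the counts of $b$, $b'$, $\star$ and the target colours are supposed to be \emph{conclusions}. Under that hypothesis alone the statement is false, so no argument can close the case distinction. Concretely, for $N_r=1$ the colouring $a,b,b',a,b,a$ of $v^r_1,\dots,v^r_6$ is a proper list-colouring using $a$ exactly three times, yet it uses $b$ twice. Worse, your fallback lever (the $b$-count) also fails for defects sitting at a block boundary: for $N_r=2$, taking $\gamma^{-1}(a)=\{v^r_1,v^r_3,v^r_5,v^r_8,v^r_{10},v^r_{12}\}$ yields a proper list-colouring with $a$ used six times and $b,b'$ used exactly $N_r=2$ times each, but with $\star$ and one target colour each used once — a ``partial selection'' that satisfies every count you could hope to import from the global demands. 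So the claim itself (and the paper's proof of it) needs repair, e.g.\ by redesigning the gadget so that a length-three gap in $\gamma^{-1}(a)$ forces two adjacent vertices whose unique remaining list colours coincide; as written, neither your argument nor the paper's establishes the stated dichotomy.
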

	\begin{claimproof}
		Observe that any proper list-coloring of $G_r$ can use the universal color~$a$ at most $3 N_r$ times, and moreover, any list-coloring~$\gamma'$ with exactly $3N_r$ occurrences of~$a$ must use $a$ either on all odd-numbered or on all even-numbered vertices.
		
		First, assume that $\gamma$ colors all even-numbered vertices with color~$a$.
		On the odd-numbered vertices, we have  $\gamma(v^r_{6k-5}) = c_{j_k}$,  $\gamma(v^r_{6k-3}) = b'$ and $\gamma(v^r_{6k-1}) = b$ for each $k \in [N_r]$.
		This is exactly the first situation from the statement since for each $j \in [k]$, the value~$j$ appears in the sequence $j_1, \dots, j_{N_r}$ exactly $r_j$ times by construction.
		
		Otherwise, $\gamma$ uses the color~$a$ on all odd-numbered vertices.
		In that case, we have $\gamma(v^r_{6k-4}) = b$,  $\gamma(v^r_{6k-2}) = \star$ and $\gamma(v^r_{6k}) = b'$ for each $k \in [N_r]$, and the second situation occurs in this case. 
	\end{claimproof}
	
	The final graph is simply obtained as the union of all the item gadgets, i.e., we take $G = \bigcup_{r \in S} G_r$.
	Importantly, observe that the size of $G$ is polynomial in the size of the input instance since all numbers were given in unary.
	Finally, define the demand function $\eta$ as follows
	\begin{itemize}
		\item $\eta(a) = 3 \cdot \sum_{r \in S} |r|$, $\eta(b) = \eta(b') = \sum_{r \in S} |r|$,
		\item $\eta(c_j) = t_j$ for every $j \in [k]$, and
		\item $\eta(\star) = \sum_{r \in S} |r| - |t|$.
	\end{itemize}
	where $t_j$ denotes the $j$-th coordinate of the target vector~$t$ and $|x|$ for $x \in \mathbb{N}^k$ denotes the sum of all coordinates of the vector~$x$.
	
	Observe that the sum of all demands indeed equals the total number of vertices in~$G$, i.e., we have $\sum_{c \in C}\eta(c) = 6 \cdot \sum_{r \in S} |r|$.

	\subparagraph{Correctness (``only if'').}
	Let $(k, S, t)$ be a yes-instance of \MSS and let $S'$ be a subset of $S$ such that $\sum_{r \in S'}r = t$.
	Let us define a proper list-coloring~$\gamma$ of~$G$ that meets the required demands.
	We specify $\gamma$ on each variable gadget~$G_r$ independently.
	To that end, fix $r \in S$.
	
	We let $\gamma$ color with the universal color~$a$ all the even-numbered vertices within~$G_r$ if $r \in S'$ and all the odd-numbered vertices otherwise.
	In both cases, the coloring of the remaining vertices inside~$G_r$ is uniquely determined because all of them have a neighbor colored with~$a$ and thus, they receive the other color from their lists of size two.

	Observe that $\gamma$ is indeed a proper list-coloring of~$G$.
	It remains to verify that it meets the required demands.
	Since $\gamma$ uses the universal color~$a$ exactly $3\cdot |r|$ times within~$G_r$, we can apply Claim~\ref{claim:vector-gadget}.
	It follows that the fill colors $b, b'$ are used exactly $|r|$ times in each item gadget $G_r$ and $\gamma$ satisfies the demands for colors $a$, $b$, and $b'$.
	
	It further follows from Claim~\ref{claim:vector-gadget} that each target color $c_j$ for $j \in [k]$ appears exactly $r_j$ times within~$G_r$ if $r \in S'$ and otherwise, it is not used at all.
	This makes the total number of occurrences of $c_j$ in $\gamma$ to be $\sum_{r \in S'} r_j$, which exactly equals $t_j = \eta(c_j)$ since $S'$ is a solution to \MSS.
	
	Finally, we need to check that the extra color~$\star$ also satisfies the demands.
	However, that follows simply from the fact that the sum of all demands equals the number of vertices in~$G$ and that $\gamma$ meets the demands of all the remaining colors.
	
	\subparagraph{Correctness (``if'').}
	Now, let $(G, C, \eta, L)$ be a yes-instance of \LCD and let $\gamma$ be a corresponding proper list-coloring that meets the demands.
	Since $\gamma$ meets the demands~$\eta$, we know that the universal color is used $3 \cdot \sum_{r \in S} |r|$ times.
	However, we have already observed that $\gamma$ cannot use the color~$a$ more than $3|r|$ times within the item gadget~$G_i$.
	It follows that $\gamma$ uses the universal color exactly~$3|r|$ times within each~$G_r$.
	
	Therefore, Claim~\ref{claim:vector-gadget} applies, and we define a set $S'$ where $S'$ contains an item $r \in S$ if and only if $\gamma$ uses the color~$a$ on all even-numbered vertices in~$G_r$.
	Fix $j \in [k]$.
	Due to Claim~\ref{claim:vector-gadget}, the target color~$c_j$ appears exactly $r_j$ times within~$G_r$ if $r \in S'$ and zero times otherwise.
	The total number of occurrences of~$c_j$ can thus be expressed as $\sum_{r \in S'} r_j$, and moreover, it equals $\eta(c_j) = t_j$ since $\gamma$ is a solution to \LCD.
	In other words, we have obtained $\sum_{r \in S'} r_j = t_j$ for every $j \in [k]$ and the set~$S'$ verifies that $(k, S, t)$ is a yes-instance of \MSS.
\end{proof}

\section{Full proof of Theorem~\ref{thm:cmpl-fpt}}
\label{apx:cmpl-fpt}
\begin{proof}
	Let $A = (\Sigma, Q, \delta, q_0, F)$ where we assume without a loss of generality that the alphabet $\Sigma = [k]$ is the initial segment of natural numbers.
	We first slightly modify~$A$ to keep track of word lengths explicitly.
	The modified set of states~$Q'$ contains two copies of each state~$q \in Q \setminus \{q_0\}$, an \emph{in-state}~$q$ and an \emph{out-state} $\overline{q}$.
	Additionally, we add a single initial out-state~$\overline{q_0}$.
	The original transitions are now redirected to lead only from out-states to in-states, i.e, we define transitions~$\delta'$ where $(\overline{q}, \alpha, q') \in \delta'$ whenever $(q,\alpha, q')$.
	Finally, there is only one transition from each in-state via a new \emph{counter letter}~$k+1$, i.e., we add to~$\delta'$ the tuple $(q, k+1, \overline{q})$ for every non-initial state $q \in Q \setminus \{q_0\}$.
	We set the set of final states to be the out-states of all original final states, i.e., we set $F' = \{\overline{q} \mid q \in F\}$.
	Let us denote the constructed NFA as $A' = ([k+1], Q', \delta', \overline{q_0}, F')$.
	Observe that each word in the language $L(A')$ is obtained from some word in $L(A)$ by interleaving it with the letter~$k+1$ on every even index, and vice versa.
	Therefore, the occurrences of the letter~$k+1$ exactly count the word lengths as desired.
	
	Now, we can describe the construction of an NFA $A''$ that additionally encodes the constraints in~$P$.
	Let us denote by $m$ the sum $\sum_{i=1}^k b_i$, i.e., the length of the desired solution.
	First, observe that if $P$ contains two pairs $(i, \beta)$ and $(i, \beta')$ for $\beta \neq \beta'$, then there is no possible solution and the algorithm can immediately report that and halt.
	Otherwise, we assume that $P = \{ (i_1, \beta_1), \dots, (i_p, \beta_p)\}$ where the sequence $i_1, \dots, i_p$ of indices is strictly increasing.
	We set $m_j = i_{j} - i_{j-1}$ for each $j \in [p+1]$ where we additionally take $i_0 = 1$ and $i_{p+1} = m+1$.
	Observe that we have $m_j \ge 1$ for every $j \in \{2, \dots, p+1\}$.
	
	The final automaton will consist of $p+1$ copies of $A'$, where each will use a different counter letter.
	For $j \in [p+1]$, let $A_j$ be a copy of $A'$ where we replace the counter letter with the letter~$k+j$.
	We distinguish the states in different automata using a subscript, e.g., the automaton~$A_j$ contains states $q_j$ and $\overline{q}_j$ for each $q \in Q \setminus \{q_0\}$ and additionally the state $\overline{q_0}_j$.
	The states of the final automaton~$A''$ are simply the union of the states of $A_1, \dots, A_{p+1}$, its initial state is the initial state $\overline{q_0}_1$ in~$A_1$, and its final states are the final states in~$A_{p+1}$.
	Let us denote by $\Sigma'$ the alphabet of $A''$, i.e. $\Sigma' = [k + p + 1]$.
	
		It remains to define transitions between the automata.
	These will precisely encode the fixed letters.
	Fix $j \in [p]$.
	For every $q$ and $q'$ such that $(q, \beta_j , q') \in \delta$ is a transition in~$A$, we add the transition $(\overline{q}_j, \beta_j, q'_{j+1})$.
	In other words, we allow any transition over the letter $\beta_j$ in~$A_j$ to alternatively jump to the equivalent state in $A_{j+1}$.
	This concludes the definition of the NFA~$A''$. Observe that the number of states is $O(p \cdot n)$ and the alphabet has size~$k+p+1$.
	We connect together the languages $L(A'')$ and $L(A)$.
	
	\begin{claim}\label{claim:fpt-cmpl}
		An arbitrary word $w \in \Sigma'^*$ is accepted by~$A''$ if and only if
		\begin{enumerate}
			\item $|w| = 2t$ for some $t \in \mathbb{N}$,\label{item:fpt-cmpl-1}
			\item for each $i \in [t]$, we have $w_{2i-1} \in \Sigma$ and $w_{2i} \in \Sigma' \setminus \Sigma$\label{item:fpt-cmpl-2},
			\item the word $w_1 w_3 \cdots w_{2t-1}$ is accepted by~$A$\label{item:fpt-cmpl-3},
			\item $w$ contains each letter $k+2, \dots, k+p+1$ at least once,\label{item:fpt-cmpl-4}
			\item for every $i \in [t-1]$, we have $w_{2i} \le w_{2(i+1)} \le w_{2i}+1$ and if $w_{2i} < w_{2(i+1)}$ then $w_{2i+1} = \beta_{w_{2i}-k}$.\label{item:fpt-cmpl-5}
		\end{enumerate}
	\end{claim}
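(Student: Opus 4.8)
The plan is to prove both directions of the equivalence by tracking the runs of the automaton $A''$. The starting observation is that every copy $A_j$ of $A'$ splits its non-initial states into \emph{out-states} $\overline{q}_j$ and \emph{in-states} $q_j$, with the only transitions inside $A_j$ being $(\overline{q}_j, \alpha, q'_j)$ for $(q,\alpha,q')\in\delta$ and the counter step $(q_j, k+j, \overline{q}_j)$; together with the inter-copy transitions $(\overline{q}_j, \beta_j, q'_{j+1})$ this forces any run of $A''$, which begins in the out-state $\overline{q_0}_1$, to alternate out-state, in-state, out-state, $\dots$, reading a letter of $\Sigma$ when it leaves an out-state and the counter letter $k+j$ of the current copy when it leaves an in-state of $A_j$. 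Since final states are out-states, accepting runs have even length, which is item~\ref{item:fpt-cmpl-1}, and letters at odd positions lie in $\Sigma$ while letters at even positions lie in $\{k+1,\dots,k+p+1\}=\Sigma'\setminus\Sigma$, which is item~\ref{item:fpt-cmpl-2}. I would then let $b_i$ denote the index of the copy the run occupies after reading $w_{2i}$, with $b_0 := 1$: it satisfies $b_i = w_{2i}-k$, it is non-decreasing, and it increases, by exactly one, only when the step at the preceding odd position is an inter-copy transition, which reads $\beta_{b_{i-1}}$. This is precisely item~\ref{item:fpt-cmpl-5}.

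For the forward implication I would fix an accepting run. Its last state lies in $A_{p+1}$, so $b_t = p+1$; as $b_0 = 1$ and the $b_i$ climb by unit steps, every value in $\{1,\dots,p+1\}$, hence every value in $\{2,\dots,p+1\}$, occurs among the $b_i$, so every counter letter in $\{k+2,\dots,k+p+1\}$ occurs in $w$, giving item~\ref{item:fpt-cmpl-4}. Finally, writing $r_i$ for the common underlying $A$-state of the in-state at position $2i-1$ and the out-state at position $2i$, with $r_0 = q_0$, each step at an odd position $2i-1$ — whether it stays in one copy or is an inter-copy jump — is witnessed by a transition $(r_{i-1}, w_{2i-1}, r_i)\in\delta$, while the final out-state being accepting forces $r_t\in F$; hence $w_1 w_3 \cdots w_{2t-1}\in L(A)$, which is item~\ref{item:fpt-cmpl-3}.

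For the converse, suppose $w$ satisfies items~\ref{item:fpt-cmpl-1}--\ref{item:fpt-cmpl-5}. I would first read off from items~\ref{item:fpt-cmpl-2},~\ref{item:fpt-cmpl-4},~\ref{item:fpt-cmpl-5} that the numbers $b_i := w_{2i}-k$ are well defined, that $b_1\in\{1,2\}$ and $b_t = p+1$, and that $b_0 := 1, b_1, \dots, b_t$ is non-decreasing with unit steps. Item~\ref{item:fpt-cmpl-3} supplies an accepting run $q_0 = r_0, r_1, \dots, r_t$ of $A$ on $w_1 w_3 \cdots w_{2t-1}$ with $r_t\in F$. I would then assemble a run of $A''$ by placing, for $i\ge 1$, the in-state $(r_i)_{b_i}$ at position $2i-1$ and the out-state $\overline{r_i}_{b_i}$ at position $2i$, starting from $\overline{q_0}_1$. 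The counter steps $(r_i)_{b_i}\xrightarrow{\,k+b_i\,}\overline{r_i}_{b_i}$ exist by construction of $A_{b_i}$; a $\Sigma$-step $\overline{r_{i-1}}_{b_{i-1}}\xrightarrow{\,w_{2i-1}\,}(r_i)_{b_i}$ is the in-copy image of $(r_{i-1}, w_{2i-1}, r_i)$ when $b_i = b_{i-1}$, and is the inter-copy transition when $b_i = b_{i-1}+1$, which is legitimate because item~\ref{item:fpt-cmpl-5} guarantees $w_{2i-1} = \beta_{b_{i-1}}$. The run ends in $\overline{r_t}_{p+1}$, a final state of $A_{p+1}$, so $w\in L(A'')$.

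The step I expect to be the main obstacle is the bookkeeping in this converse construction: one must verify that the copy index dictated by the counter letters is exactly the one in which the jump transitions are available, and in particular handle the word boundary — the first $\Sigma$-letter $w_1$ and the first counter letter $w_2$ — which is not covered directly by item~\ref{item:fpt-cmpl-5} and must be reconciled with the fact that the run necessarily begins in $A_1$ (so that $w_2 = k+2$ can be realised only together with $w_1 = \beta_1$, and the letter $k+1$ is absent from $w$ precisely when the very first step is already a jump into $A_2$).
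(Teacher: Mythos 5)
Your proposal follows essentially the same route as the paper's proof: both directions are handled by exploiting the forced alternation between out-states and in-states (which yields items~1 and~2), reading off item~3 from the underlying $A$-states of the run, and tracking the index of the current copy $A_j$ through the counter letters $w_{2i}$ --- your $b_i = w_{2i}-k$ is exactly the bookkeeping the paper does implicitly. The converse construction (placing $(r_i)_{b_i}$ at position $2i-1$ and $\overline{r_i}_{b_i}$ at position $2i$ and checking that each odd-position step is either an in-copy image of a $\delta$-transition or a legitimate inter-copy jump) is the same run assembly as in the paper.

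The boundary issue you single out is real, and you are in fact more careful here than the source. Item~5 only constrains $w_{2i+1}$ for $i \in [t-1]$, and item~4 does not require the letter $k+1$ to occur, so a word with $w_2 = k+2$ and $w_1 \neq \beta_1$ can satisfy all five properties while being rejected by $A''$ (the first step out of $\overline{q_0}_1$ must either stay in $A_1$, forcing $w_2 = k+1$, or be the jump reading $\beta_1$). The paper's own converse argument silently appeals to ``property~5'' for the transition leaving the initial state, where property~5 does not apply. The clean repair is the one you hint at: adopt the convention $w_0 = k+1$ (equivalently $b_0 = 1$) and extend item~5 to $i = 0$, which is consistent with how the claim is later used in the Parikh-image reduction, since $m_1$ may be zero exactly when the first constrained position is $i_1 = 1$ and hence $\beta_1$ is forced at position~1. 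With that amendment your argument goes through; I would not count this as a gap in your proposal but as a flaw in the stated claim that you correctly detected.
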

	\begin{claimproof}
		First, suppose that $w \in L(A'')$ and let $r_0, \dots, r_{|w|+1}$ be a sequence of states of~$A''$ witnessing that $w$ is accepted by~$A'$.
		It follows immediately from the properties of~$A''$ that $r_i$ is an out-state if and only if $i$ is even, which verifies property~\ref{item:fpt-cmpl-2}.
		The final states of $A''$ are only out-states and thus, $w$ must be of even length $2t$ for some $t \in \mathbb{N}$ (property~\ref{item:fpt-cmpl-1}).
		Moreover, the sequence of visited out-states verifies that the word $w_1 w_3 \cdots w_{2t-1}$ is accepted by the original automaton~$A$ (property~\ref{item:fpt-cmpl-3}).
		
		Since the sequence of states starts in $A_1$ and ends in $A_{p+1}$ it must consecutively traverse through all the automata $A_1, \dots, A_{p+1}$ and thus, we have $w_{2i} \le w_{2(i+1)}$ for every $i\in[t-1]$.
		Moreover, at least one transition between some in-state $q_i$ and out-state $\overline{q}_i$ must have occurred in each automaton~$A_i$ for $i \ge 2$, which verifies property~\ref{item:fpt-cmpl-4} and the first half of property~\ref{item:fpt-cmpl-5}.
		Note that this does not necessarily hold for~$A_1$ since the first transition might have been from the initial state~$\overline{q_0}_1$ directly to some in-state $q'_2$ in the automaton~$A_2$.
		Whenever a transition jumps from~$A_i$ to~$A_{i+1}$, the corresponding letter in~$w$ must match the~$i$th constraint, i.e., it must be equal to~$\beta_i$.
		This implies the second half of property~\ref{item:fpt-cmpl-5}, which wraps up the proof of this first implication.
		
		For the other direction, assume a word $w$ of length~$2t$ satisfies all properties~\ref{item:fpt-cmpl-1}--\ref{item:fpt-cmpl-5}.
		Let $r_0, \dots, r_t$ be a sequence of states verifying that~$A$ accepts the word $w_1 w_3 \cdots w_{2t-1}$.
		We define a sequence of states $r'_0, \dots, r'_{2t}$ in~$A''$ and our goal is to show that it is an accepting sequence for the word~$w$.
		First, we set $r'_0$ to be the initial state $\overline{r_0}_1$ within~$A_1$.
		For $i \in [t]$, we set $r'_{2i-1} = (r_{i})_j$ and $r'_{2i} = \overline{r_{i}}_j$ where $j = w_{2i}-k$.
		
		Clearly, the tuple $(r'_{2i-1}, w_{2i}, r'_{2i})$ is a valid transition for every $i \in [t]$ because it goes between an in-state and the corresponding out-state within the automaton~$A_j$ where $j=w_{2i}-k$.
		We distinguish two possibilities for the tuple $(r'_{2i-2}, w_{2i-1}, r'_{2i-1})$ for each $i \in [t]$.
		Either $r'_{2i-2}$ and $r'_{2i-1}$ are states within the same automaton~$A_j$ for some~$j \in [p+1]$, in which case $(r'_{2i-2}, w_{2i-1}, r'_{2i-1})$ is a valid transition in~$A''$ since $(r_{i-1}, w_{2i-1}, r'_{i})$ was a valid transition in the original automaton~$A$.
		Otherwise, $r'_{2i-2}$ and $r'_{2i-1}$ lie in automata~$A_j$ and~$A_{j+1}$ for some $j \in [p]$, respectively, by property~\ref{item:fpt-cmpl-4} and moreover, $w_{2i-1}$ is exactly the letter~$\beta_j$ by property~\ref{item:fpt-cmpl-5}.
		Observe that this exactly agrees with how we defined the transitions from $A_j$ to $A_{j+1}$, and $(r'_{2i-2}, w_{2i-1}, r'_{2i-1})$ is a valid transition in~$A''$ for every~$i \in [t]$.
	\end{claimproof}
	
	In order to reduce the problem to the membership for Parikh images, we also need to define the target vector~$\mathbf{b}'$.
	We set $b'_i = b_i$ for every $i \in [k]$ and $b'_{k + j} = m_j$ for every $j \in [p+1]$.
	Observe that we have $\sum_{i=1}^{c + p + 1}b'_i = 2m$.
	
	\begin{claim}
		\label{claim:cmpl-to-parikh}
		There is a $P$-constrained word $w \in L(A)$ such that $\mathcal{P}(w) = \mathbf{b}$ if and only if there is a word $w' \in L(A')$ such that $\mathcal{P}(w') = \mathbf{b}'$.
	\end{claim}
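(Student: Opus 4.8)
The plan is to read off both implications from the combinatorial description of $L(A'')$ given in Claim~\ref{claim:fpt-cmpl}, where $A''$ is the NFA over $\Sigma' = [k+p+1]$ constructed above (the automaton named in the statement). Write $m = \sum_{i=1}^{k} b_i$, so that every word of $L(A)$ with Parikh image $\mathbf{b}$ has length exactly $m$, recall $\sum_{i=1}^{k+p+1} b'_i = 2m$, and note we may assume each constrained index $i_j$ lies in $[m]$, since otherwise both sides of the equivalence are vacuously false (there is no $P$-constrained word of length $m$, and no word of $L(A'')$ ever reaches a final state without using the letter $k+p+1$). I will use the map that sends a word $w = w_1 \cdots w_m$ over $\Sigma$ to the word $w' = w'_1 \cdots w'_{2m}$ over $\Sigma'$ that places $w$ on the odd positions, $w'_{2\ell - 1} = w_\ell$, and the counter letter $k + j(\ell)$ on the even position $2\ell$, where $j(\ell) \in [p+1]$ is the unique index with $i_{j-1} \le \ell < i_j$; that is, position $\ell$ is tagged by the block between consecutive constrained positions containing it. Since block $j$ has $m_j = i_j - i_{j-1}$ positions, the counter letter $k+j$ occurs exactly $m_j$ times in $w'$, while the letters of $\Sigma$ occur in $w'$ only on odd positions and there coincide with $w$; hence $\mathcal{P}(w) = \mathbf{b}$ if and only if $\mathcal{P}(w') = \mathbf{b}'$. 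It remains to show that $w$ is a $P$-constrained word of $L(A)$ exactly when $w' \in L(A'')$, which I do by checking the five conditions of Claim~\ref{claim:fpt-cmpl}.

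For the ``only if'' direction, start from a $P$-constrained $w \in L(A)$ with $\mathcal{P}(w) = \mathbf{b}$ and the paired word $w'$. Conditions (1)--(3) are immediate ($|w'| = 2m$, the odd positions carry $w \in L(A)$, the even positions carry letters of $\Sigma' \setminus \Sigma$), condition (4) holds because $m_j \ge 1$ for $j \in \{2, \dots, p+1\}$, and condition (5) holds because the sequence $j(1), \dots, j(m)$ is non-decreasing and rises by exactly one at each position $\ell \in \{i_1, \dots, i_p\}$, so that whenever the counter rises from $k+j$ to $k+(j+1)$ between positions $i_j - 1$ and $i_j$, the logical letter at position $i_j$ is $w_{i_j} = \beta_j$ because $w$ is $P$-constrained. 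Conversely, given $w' \in L(A'')$ with $\mathcal{P}(w') = \mathbf{b}'$, condition (1) and a Parikh comparison give $|w'| = 2m$, the odd subword $w = w'_1 w'_3 \cdots w'_{2m-1}$ lies in $L(A)$ by condition (3), is over $\Sigma$ by condition (2), and satisfies $\mathcal{P}(w) = \mathbf{b}$; and $w$ is $P$-constrained because the even subword is non-decreasing (condition (5)) with exactly $m_j$ copies of $k+j$ (Parikh image), hence equals $k + j(\ell)$ at position $\ell$, so the counter rises from $k+j$ to $k+(j+1)$ between positions $i_j - 1$ and $i_j$ (block $j+1$ is non-empty as $m_{j+1} \ge 1$), and condition (5) forces $w_{i_j} = \beta_j$.

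The one point requiring genuine care is the boundary: if $i_1 = 1$ then the first block is empty ($m_1 = 0$), the counter letter $k+1$ is never used, the position ``$i_1 - 1$'' does not exist, and condition (5) by itself does not pin down $w_1$. I would handle this by a direct inspection of $A''$: an accepting run starts at $\overline{q_0}_1$ and cannot take the internal counter transition over $k+1$ inside the copy $A_1$ (as $k+1$ is unused), so its first step must be a cross-transition over $\beta_1$ into $A_2$, forcing $w'_1 = w_1 = \beta_1$; in the ``only if'' direction, when $m_1 = 0$ one builds the accepting run for $w'$ starting with exactly this cross-transition (valid since the accepting run of $A$ on $w$ begins with a $\beta_1$-transition) and then proceeds through $A_2, \dots, A_{p+1}$ as usual. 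I expect this small structural case, rather than the interleaving correspondence, to be the only delicate part of the argument.
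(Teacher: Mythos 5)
Your proof follows essentially the same route as the paper's: it passes through the five-property characterization of $L(A'')$ from Claim~\ref{claim:fpt-cmpl} and the interleaving correspondence that places $w$ on odd positions and the block-counter letters on even positions, and both directions match the paper's argument (including the Parikh-image bookkeeping via $m_j = i_j - i_{j-1}$). Your explicit treatment of the boundary case $i_1 = 1$ (where $m_1 = 0$, the letter $k+1$ never occurs, and condition~5 of Claim~\ref{claim:fpt-cmpl} alone does not force $w_1 = \beta_1$) is a genuine refinement that the paper's proof silently skips, and your direct inspection of the accepting run of $A''$ there is correct.
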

	
	\begin{claimproof}
		First, suppose that there is a $P$-constrained word $w \in L(A)$ of length~$m$ such that $\mathcal{P}(w) = \mathbf{b}$.
		We define a word $w'$ of length~$2m$ by inserting letters from the set $\{k+1, \dots, k+p+1\}$ into $w'$ on every other position in such a way that (i) there are exactly $m_j$ occurrences of the letter $k + j$ for each $j \in [p+1]$, and (ii) the occurrences of these extra letters occur in increasing order of their values.
		Formally for each $i \in [m]$, we set $w_{2i-1} = w'_i$ and $w_{2i} = k + t$ where $t$ is the smallest integer such that $\sum_{j = 1}^t m_j \ge i$, or equivalently, $i_t - i_0 = i_t - 1 \ge i$.
		In this way, we guarantee that $\mathcal{P}(w') = \mathbf{b'}$.
		Moreover, we have $w' \in L(A'')$ by Claim~\ref{claim:fpt-cmpl} since $w'$ satisfies properties~\ref{item:fpt-cmpl-1}--\ref{item:fpt-cmpl-4} therein by definition and property~\ref{item:fpt-cmpl-5} follows from the fact that $w$ is $P$-constrained.
		
		Now we assume that there is a word $w' \in L(A'')$ such that $\mathcal{P}(w') = \mathbf{b'}$.
		The word $w'$ must satisfies all properties \ref{item:fpt-cmpl-1}--\ref{item:fpt-cmpl-5} from Claim~\ref{claim:fpt-cmpl}.
		Let $w$ be the word obtained by restricting $w'$ to odd-numbered letters.
		By property~\ref{item:fpt-cmpl-3}, we know that $w \in L(A)$ and by property~\ref{item:fpt-cmpl-2}, we get that $\mathcal{P}(w) = \mathbf{b}$ since $w$ equals $w'$ restricted to the letters from~$[k]$.
		It remains to show that $w$ is $P$-constrained.
		However, that follows from property~\ref{item:fpt-cmpl-5} due to the way we have set up~$\mathbf{b'}$ for the letters $k+1, \dots, k+p+1$.
	\end{claimproof}

	The theorem follows by invoking the algorithm of Theorem~\ref{thm:parikh-images} on the NFA~$A''$ and the tuple~$\mathbf{b}'$.
\end{proof}

\section{Proofs omitted from Section~\ref{sec:nodemands}}
\label{apx:nodemands}
\subsection{Proof of Theorem~\ref{thm:DPE}}
\begin{proof}
We reduce from \textsc{Precoloring Extension} that is known to be \NP-hard even when restricted to unit interval graphs~\cite{Marx06}.
Let $(G, C, \gamma')$ be an instance of \textsc{Precoloring Extension}, where $G$ is a unit interval graph on $n$ vertices and $\gamma': A \to  C$ for some $A \subseteq V(G)$ is a precoloring.
Any unit interval graph admits a representation where all the endpoints of intervals are pairwise different integers from the set $\{0, \dots, n^2\}$, each interval has length exactly~$n$, and moreover, such a representation can be computed from~$G$ in linear time~\cite{CorneilKNOS95}.

We start by computing such a representation, that is, each vertex $v \in V(G)$ has an associated interval $[v_L, v_R]$ where both $v_L, v_R \in \{0, \dots, n^2\}$ and $v_R - v_L = n$.
Furthermore, the intervals of two different vertices have completely disjoint endpoints.

Now, we describe the construction of an instance $(G', C', \beta', d)$ of \DPE where $G$ is a path.
First, we set $d = 3n$, $\ell = 3n^2 + 2d$ and we take $G'$ to be a path of length $\ell$ on vertices $w_0, w_1, \dots, w_{\ell}$ in this order along the path.
We also define a map $\varphi\colon \{0, \dots, n^2\} \to V(G')$ where $\varphi(i) = w_{3i + d}$ and we refer to the vertices in the image $\varphi(\{0, \dots, n^2\})$ as the \emph{main vertices}.
First, we notice that the main vertices are somewhat sparsely distributed.

\begin{claim}
\label{claim:non-main-dense}
For an arbitrary main vertex $w_i$, there exist at least $d+1$ pairwise different non-main vertices at distance at most~$d$ from~$w_i$.
\end{claim}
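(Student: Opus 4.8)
The plan is to count non-main vertices in the closed $d$-neighborhood of an arbitrary main vertex~$w_i$ directly, using the explicit spacing of main vertices dictated by the map~$\varphi$. By construction, the main vertices are exactly $\varphi(0), \varphi(1), \dots, \varphi(n^2)$, sitting at path positions $d, d+3, d+6, \dots, d+3n^2$; that is, consecutive main vertices are exactly~$3$ apart along the path, the leftmost main vertex is at position~$d$, and the rightmost is at position $d + 3n^2 = \ell - d$. So every main vertex lies at path-distance at least~$d$ from both endpoints $w_0$ and $w_\ell$ of~$G'$.

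First I would fix a main vertex $w_i = \varphi(j)$ for some $j \in \{0, \dots, n^2\}$, so $i = 3j + d$. Because $d \le i \le \ell - d$, the set of vertices at path-distance at most~$d$ from~$w_i$ is exactly $\{w_{i-d}, w_{i-d+1}, \dots, w_{i+d}\}$, which contains $2d+1$ vertices in total (no truncation at the ends occurs, which is the only place the bound $d \le i \le \ell-d$ is used). Next I would count how many of these $2d+1$ vertices are themselves main. A vertex $w_m$ in this window is main iff $m \equiv d \pmod 3$ and $d \le m \le \ell - d$; since the window $[i-d, i+d]$ has length $2d+1$, it contains at most $\lceil (2d+1)/3 \rceil = \lceil (2 \cdot 3n + 1)/3\rceil$ main vertices, which is at most $2n+1$. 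Hence the number of \emph{non-main} vertices at distance at most~$d$ from~$w_i$ is at least $(2d+1) - (2n+1) = 2d - 2n = 6n - 2n = 4n = \tfrac{4}{3}d > d + 1$ for all $n \ge 1$ (indeed $4n \ge d+1$ whenever $4n \ge 3n+1$, i.e. $n \ge 1$). All of these non-main vertices are pairwise distinct since they are distinct vertices of the path~$G'$, so the claim follows.

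I do not expect any real obstacle here; the statement is a routine pigeonhole count once the arithmetic of positions is set up, and the only subtlety is making sure the $d$-neighborhood of a main vertex is not truncated by the ends of the path, which is guaranteed by the choice $\ell = 3n^2 + 2d$ placing a buffer of~$d$ vertices beyond the extreme main vertices on each side. If one wants the cleanest possible bound, it suffices to note $2d + 1 - (2n+1) = 2(d-n) = 4n \ge d+1$, so in fact the neighborhood contains many more than $d+1$ non-main vertices, with room to spare; this slack will presumably be exploited in the subsequent parts of the reduction.
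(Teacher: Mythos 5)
Your proof is correct and follows essentially the same route as the paper's: both count main vertices at density at most $1/3$ in the (untruncated) radius-$d$ window around $w_i$, yielding at least $4d/3 = 4n \ge d+1$ non-main vertices for $n \ge 1$. (Only nitpick: your intermediate "$4n > d+1$" should be "$\ge$", as equality holds at $n=1$ — which you correct yourself in the parenthetical.)
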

\begin{proof}
Any consecutive segment of~$d$ vertices contains at most $\frac{d}{3}$ main vertices since a vertex~$w_j$ can lie in the image of~$\varphi$ only when $j$ is divisible by three.
Moreover, we ensured that all main vertices lie at a distance of at least $d$ from both endpoints of the path.
Thus for a main vertex~$w_i$, there are at least~$\frac{2d}{3}$ non-main vertices in both of the segments $w_{i+1}, \dots, w_{i+d}$ and $w_{i-d}, \dots, w_{i-1}$.
It remains to observe that the inequality $\frac{4d}{3} \ge d+1$ holds whenever $d \ge 3$, which is equivalent to $n \ge 1$.
\end{proof}

Furthermore, we define a subset of the main vertices $S = \{\varphi(v_L) \mid v \in V(G)\}$ which will be referred to as \emph{representative vertices}.
Recall that a \emph{$d$-th power} of a graph $H$ is a graph $H^d$ on the same vertex set $V(H)$ that contains an edge between two vertices $u$ and~$v$ if and only if $\dist_H(u,v) \le d$.

\begin{claim}
\label{claim:representative-induced}
	The set $S$ induces a copy of $G$ in the $d$-th power of~$G'$.
\end{claim}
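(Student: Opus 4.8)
The plan is to show that the map $v \mapsto \varphi(v_L)$ is an isomorphism between $G$ and the subgraph of $(G')^d$ induced by~$S$. First I would dispatch the bookkeeping: the map $i \mapsto \varphi(i) = w_{3i+d}$ is strictly increasing in~$i$, hence injective on $\{0, \dots, n^2\}$, and the left endpoints $\{v_L \mid v \in V(G)\}$ are pairwise distinct by the choice of representation, so $v \mapsto \varphi(v_L)$ is a bijection from $V(G)$ onto~$S$; in particular $|S| = n$. One should also check that $3 v_L + d$ always lies in $\{0, \dots, \ell\}$, so that $\varphi(v_L)$ is a well-defined vertex of the path, which is immediate from $v_L \le n^2$, $d = 3n$ and $\ell = 3n^2 + 2d$.

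The crux is the distance formula in the path $G' = w_0 w_1 \cdots w_\ell$: for any $a, b \in \{0, \dots, n^2\}$ we have $\dist_{G'}(\varphi(a), \varphi(b)) = |(3a+d) - (3b+d)| = 3\,|a - b|$. Hence $\dist_{G'}(\varphi(a), \varphi(b)) \le d$ exactly when $|a - b| \le d/3 = n$. In particular, distinct $u, v \in V(G)$ yield distinct main vertices $\varphi(u_L) \neq \varphi(v_L)$ (so no spurious loop arises), and these are adjacent in $(G')^d$ precisely when $|u_L - v_L| \le n$.

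It then remains to match this with adjacency in~$G$. Since $G$ is given by its unit interval representation in which every interval has length~$n$, two distinct vertices $u$ and $v$ are adjacent in~$G$ if and only if the closed intervals $[u_L, u_L + n]$ and $[v_L, v_L + n]$ intersect, that is, if and only if $|u_L - v_L| \le n$. Chaining the equivalences gives $uv \in E(G) \iff |u_L - v_L| \le n \iff \varphi(u_L)\varphi(v_L) \in E((G')^d)$, so the bijection preserves both adjacency and non-adjacency, and $S$ therefore induces a copy of~$G$ in $(G')^d$.

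I do not anticipate a genuine obstacle here; the argument is essentially one identity ($\dist_{G'}(\varphi(a),\varphi(b)) = 3|a-b|$) plus the standard intersection criterion for equal-length closed intervals, tied together by the choice $d = 3n$. The only points that warrant a little care are the index bookkeeping above and stating the interval-intersection criterion correctly — strictly speaking the boundary value $|u_L - v_L| = n$ never occurs, since all interval endpoints are pairwise distinct, although the displayed equivalence holds regardless of how that case is treated.
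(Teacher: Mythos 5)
Your proof is correct and follows essentially the same route as the paper's: injectivity of $v \mapsto \varphi(v_L)$ from the distinctness of interval endpoints, plus the chain $uv \in E(G) \iff |u_L - v_L| \le n \iff \dist_{G'}(\varphi(u_L),\varphi(v_L)) \le 3n = d$. You simply spell out the distance identity and the interval-intersection criterion more explicitly than the paper does, which is fine.
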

\begin{claimproof}
	We claim that the map $\rho\colon G \to (G')^d[S]$ defined as $\rho(v) = \varphi(v_L)$ is an isomorphism.
	First, observe that $\rho$ is injective since all the endpoints of intervals are pairwise different and $\varphi$ itself is injective.
	It remains to notice that there is an edge between two vertices $u$ and $v$ in~$G$ if and only if $|u_L - v_L| \le n$ which is if and only if $\dist_{G'}(\varphi(u), \varphi(v)) \le 3n = d$.
\end{claimproof}

We define the set of colors $C'$ as $C \cup C^\star$ where $C^\star = \{c^\star_0, \dots, c^\star_d\}$ is a set of $d+1$ \emph{auxiliary colors} completely disjoint from~$C$.
Let $j_1< j_2 \dots < j_t$ be the ordered sequence of indices of all non-representative vertices in~$G'$ where $t = |V(G')| - |S|$.
We let $\beta^\star\colon V(G')\setminus S \to C^\star$ be a coloring defined as $\gamma^\star(w_{j_k}) = c^\star_{k \bmod (d+1)}$ for every $k \in [t]$.
In other words, $\beta^\star$ cyclically iterates through the set~$C^\star$ of auxiliary colors but only on the non-representative vertices.
Observe that $\beta^\star$ is a proper partial $d$-coloring because a monochromatic pair of vertices $w_i$, $w_j$ can occur at distance at least $d+1$ and that happens if only if there is no representative vertex in the segment $w_{i+1}, \dots, w_{j-1}$.

Let $B \subseteq S$ be the set of all representative vertices that correspond to the precolored vertices of~$G$, i.e., $B = \{\varphi(v_L) \mid v \in A\}$.
The set of precolored vertices in $G'$ is going to be $A' = (V(G') \setminus S) \cup B$ and we define the precoloring $\beta'\colon A' \to C'$ as follows
\[
\beta'(w_i) = \begin{cases}
	\beta^\star(w_i) &\text{if $w_i \notin S$,}\\
	\gamma'(v) &\text{if $w_i \in B$ and $w_i = \varphi(v_L)$ for $v \in A$.}
\end{cases}
\]

This concludes the construction of the instance.

\subparagraph{Correctness (``only if'').}
Let $(G, C, \gamma')$ be a yes-instance of \textsc{Precoloring Extension} and let $\gamma \colon V(G) \to C$ be a proper coloring that extends~$\gamma'$.
We produce a coloring of~$G'$ that mirrors $\gamma$ on the representative vertices and elsewhere, it equals the precoloring~$\beta'$.
Formally, let $\beta\colon V(G') \to C'$ where
\[
\beta(w_i) = \begin{cases}
	\gamma(v) &\text{if $w_i \in S$ and $w_i = \varphi(v_L)$,}\\
	\beta'(w_i) &\text{otherwise.}
\end{cases}
\]

Clearly, $\beta$ extends the precoloring~$\beta'$, and it suffices to verify that $\beta$ is a proper $d$-distance coloring.
Observe that $\beta$ uses disjoint sets of colors on the representative and non-representative vertices, respectively.
Restricted to non-representative vertices, $\beta$ is exactly equal to~$\beta^\star$, which does not contain any monochromatic pair of vertices at distance at most~$d$.
And if $\beta$ contained a monochromatic pair of non-representative vertices at distance at most~$d$, then $G$ would contain a monochromatic pair of adjacent vertices by Claim~\ref{claim:representative-induced}.

\subparagraph{Correctness (``if'').}
Now,  assume that $(G', C', \beta', d)$ is a yes-instance of \DPE and let $\beta\colon V(G') \to C'$ be a proper $d$-distance coloring that extends~$\beta'$.
We simply set~$\gamma\colon V(G) \to C'$ to be the coloring induced by~$\beta$ on the representative vertices, i.e., we set $\gamma(v) = \beta(\varphi(v_L))$.
Claim~\ref{claim:representative-induced} directly implies that~$\gamma$ is a proper coloring of~$G$, and thus, it remains to verify that~$\gamma$ does not use any auxiliary color.

Assume for a contradiction that there is a vertex~$v \in V(G)$ such that $\beta(\varphi(v_L)) = c^\star_j$ for some  $j \in \{0, \dots, d\}$.
By Claim~\ref{claim:non-main-dense}, there are at least~$d+1$ non-main vertices at distance at most~$d$ from~$\varphi(v_L)$.
On these non-main vertices, the precoloring $\beta'$ is by definition equal to the auxiliary coloring~$\beta^\star$ which iterates cyclically through the auxiliary colors $C^\star$.
It follows that there must be a non-main vertex~$w_i$ at distance at most $d$ from~$\varphi(v_L)$ such that $\beta(w_i) = \beta^\star(w_i) = c^\star_j$ and we reached a contradiction with $\beta$ being a proper $d$-distance coloring.
\end{proof}

%

\subsection{Proof of Theorem~\ref{thm:DLC}}
\begin{proof}
Let $G$ be the input path on~$n$ vertices $v_1, \dots, v_n$ in this order along the path.
The crucial observation that given an arbitrary $d$-distance list pre-coloring $\gamma' \colon V(G)\setminus \{v_i\}$ for any~$i \in [n]$, there are at most $2d$ colors that cannot be used at vertex~$v_i$ to extend $\gamma'$ to a proper $d$-distance list coloring of~$G$.
It follows that whenever a vertex~$v$ has more than $2d+1$ available colors in its list~$L(v)$, we can safely prune the list down to arbitrary~$2d+1$ colors without affecting the feasibility of the solution.
We remark that this is where the argument breaks down if we would also consider demands.
In the remainder of the proof, we describe a dynamic programming algorithm that finds a feasible $d$-distance list coloring $\gamma$ in time $\ell^d \cdot n$ where $\ell = \max_i |L(v_i)|$.
The result follows by running this algorithm on the input graph~$G$ with lists arbitrarily pruned down to size at most~$2d+1$.

To simplify the following, let us denote by $N_i$ the vertices $v_j, \dots, v_i$ with $j = \max(i-d+1, 1)$.
Notice that when coloring~$G$ in the order along the path, the vertex~$v_i$ can be colored using a color $c \in C$ if and only if $c$ is not already used on the set $N_{i-1}$.
The algorithm inductively computes for each $i \in [n]$, the set $S_i$ of all the possible list colorings of vertices in~$N_i$ that can be extended to a full $d$-distance list coloring of the path $v_1, \dots, v_i$.

We first set $S_0$  to contain the empty coloring.
For $i \in [n]$, we let $S_i$ contain a particular partial coloring $\gamma_i \colon G[N_i] \to C$ if and only if there exists $\gamma_{i-1} \in S_{i-1}$ such that (i) $\gamma_i$ and $\gamma_{i-1}$ agree on $N_i \cap N_{i-1}$, (ii) the color $\gamma_i(v_i)$ is not used by~$\gamma_{i-1}$ and (iii) $\gamma_i(v_i) \in L(v_i)$.
There exists a $d$-distance list coloring of the graph~$G$ if and only if $S_n$ is non-empty.
It follows by a standard induction on~$i$ that there exists a $d$-distance list coloring~$\gamma$ of the path $v_1, \dots, v_i$ if and only if there is some partial coloring $\gamma_i \in S_i$ where $\gamma$ extends $\gamma_i$.

It remains to discuss the running time.
The total number of list colorings of~$N_{i+1}$ is at most $(2d+1)^d$ because each list contains at most~$2d+1$ colors.
Moreover, there are at most $2d+1$ choices for the color of $v_i$ and for each, we verify the condition (ii) in time~$O(d)$.
It follows that the set $S_i$ can be computed from the set~$S_{i-1}$ in time $d^{O(d)}$ and the whole computation takes $d^{O(d)} \cdot n$ time as promised.
\end{proof}

\end{document}